\newtheorem{assertion}{Incorrect Theorem}
\newtheorem{claim}{Claim}
\newcommand\ALG@originalstep{\ALG@step}
\newcommand\ALG@alternativestep{$\triangleright$}
\newenvironment{inputblock}{
    \renewcommand\ALG@step{\ALG@alternativestep}
    \Input
}{
    \EndInput
    \renewcommand\ALG@step{\ALG@originalstep}
}
\newenvironment{outputblock}{
    \renewcommand\ALG@step{\ALG@alternativestep}
    \Output
}{
    \EndOutput
    \renewcommand\ALG@step{\ALG@originalstep}
}
\newcommand{\returnstmt}[1]{\State \textbf{return} #1}
\newcommand{\abs}[1]{\left| #1 \right|}
\newcommand{\virgolette}[1]{``#1''} 
\newcommandx\evaltime[1][usedefault, addprefix=\global, 1=]{\mathord{\mathrm{EO}}_{#1}}%
\newcommandx\oracletime[1][usedefault, addprefix=\global, 1=]{\mathord{\mathrm{EO}}_{#1}}%
\newcommandx\submintime[2][usedefault, addprefix=\global, 1=, 2=k]{\mathord{\mathrm{\#\#\#}}_{#1,#2}}%
\newcommandx\subrettime[2][usedefault, addprefix=\global, 1=, 2=k]{\mathord{\mathrm{SFM}}_{#1}}%
\newcommandx\supaddcovertime[2][usedefault, addprefix=\global, 1=, 2=k]{\mathord{\mathrm{SAC}}_{#1,#2}}%
\newcommandx\leastcoretime[1][usedefault, addprefix=\global, 1=]{\mathord{\mathrm{LC}}_{#1}}%
\global\long\def\bigo{\mathop{O}}%
\global\long\def\bigo{\mathop{O}}%
\newtheorem{proposition}{Proposition}
\newtheorem{theorem}{Theorem}
\newtheorem{lemma}{Lemma}
\title{Faster Algorithms for the Least-Core value and the Nucleolus in Convex Games}
\author{Giacomo Maggiorano, Alessandro Sosso, Gautier Stauffer}
\date{}
\begin{document}

\maketitle

\begin{abstract}
The nucleolus is a central solution concept in cooperative game theory. While its computation is NP-hard in general, it can be computed in polynomial time for convex games; however, the only published polynomial-time algorithm relies on the ellipsoid method. We develop a combinatorial alternative based on reduced games and iterative least-core value computations. Leveraging submodular function minimization and polyhedral structure in a novel way, we obtain a faster combinatorial algorithm for computing the least-core value, improving the oracle complexity by a factor $n^3$ over previous approaches. As a consequence, we obtain a new strongly polynomial-time and combinatorial algorithm for computing the nucleolus in convex games. Preliminary analysis indicates an improved oracle complexity compared to the ellipsoid-based algorithm.
\end{abstract}





\maketitle




\global\long\def\powerset#1{2^{#1}}%

\global\long\def\setdef{\;\middle|\;}%

\global\long\def\argmax{\mathop{\arg\max}}%
\global\long\def\argmin{\mathop{\arg\min}}%

\global\long\def\algset{\leftarrow}%

\global\long\def\supaddcover#1{\overline{#1}}%

\newpage
\section{Introduction}

Cooperative game theory provides a powerful framework for modeling situations in which independent agents can achieve mutual benefits through collaboration. Its applications span a wide range of fields, including supply chain management~\cite{lozano2013cooperative}, transportation planning~\cite{guajardo2016review}, power systems~\cite{churkin2021review}, water resource allocation~\cite{wang2003water}, inventory management~\cite{fiestras2011cooperative}, maritime spatial planning~\cite{kyriazi2017cooperative}, and kidney exchange~\cite{benedek2025partitioned}. For a recent survey on applications in operations management, we refer the reader to~\cite{luo2022core}.

A central objective in cooperative game theory is to determine when collaboration among agents is beneficial and to design equitable mechanisms---known as \emph{solution concepts}---for distributing the resulting collective gains. Prominent examples include the \emph{core}~\cite{gillies1953some}, the \emph{Shapley value}~\cite{shapley1953value}, and the {\em least-core} and the \emph{nucleolus}~\cite{schmeidler1969nucleolus}. The present paper concentrates on the latter two concepts.

While conceptually appealing, the nucleolus poses significant computational challenges as the number of agents increases. Although it can be computed efficiently for small instances~\cite{benedek2021finding}, computing the nucleolus is NP-hard in general. This hardness result holds even for several important classes of games, including flow games~\cite{deng1999algorithmic}, minimum-cost spanning tree games~\cite{faigle1998note}, weighted threshold games~\cite{elkind2007computational}, and $b$-matching games~\cite{konemann2024complexity}. Nonetheless, polynomial-time algorithms have been developed for several structured settings, including tree games~\cite{megiddo1978computational,granot1996kernel}, assignment games~\cite{solymosi1994algorithm}, matching games~\cite{kern2003matching,konemann2020computing}, shortest path games~\cite{baiou2019algorithm}, and other classes amenable to dynamic programming~\cite{koenemann2023framework} (see also~\cite{granot1998characterization,chen2012computing,potters2006nucleolus,baiou2022network,xiao2023arboricity,baiou2022horizontal} for additional examples).

Convex games~\cite{shapley1971cores} form a foundational subclass of cooperative games, characterized by the property that a player's marginal contribution increases as the coalition grows. Key examples include airport games~\cite{littlechild1976further,hou2013convexity}, bankruptcy games~\cite{aumann1985game,curiel1987bankruptcy}, standard tree games~\cite{granot1996kernel}, and sequencing games~\cite{curiel1989sequencing}. The nucleolus of a convex game can be computed in (oracle) polynomial time, assuming access to an evaluation oracle for the value function, using an ellipsoid-based algorithm~\cite{faigle2001computation}.

Structural results~\cite{maschler1971kernel} suggest an alternative approach for computing the nucleolus based on iteratively solving subgames involving fewer players, known as \emph{reduced games}. A naive implementation of this reduced-game approach leads to an exponential-time algorithm. An established yet unpublished manuscript~\cite{kuipersPolynomialTimeAlgorithm1996} claimed a polynomial-time implementation. 

In this paper, we revisit this line of research, identify a critical flaw in the approach of~\cite{kuipersPolynomialTimeAlgorithm1996}, and develop a corrected combinatorial framework for computing the nucleolus in convex games. Our main contributions are threefold. First, we provide a counterexample to the approach of~\cite{kuipersPolynomialTimeAlgorithm1996} and establish a valid reduced-game--based procedure. Second, we introduce a new strongly polynomial-time combinatorial algorithm for computing the least core value by reformulating the feasibility of a candidate least-core value using Frank’s discrete sandwich theorem and submodular function minimization. Third, we show that this improvement leads to a new strongly polynomial-time combinatorial algorithm for computing the nucleolus in convex games, with improved oracle complexity compared with existing approaches, according to our preliminary analysis.

In addition, our techniques yield a new algorithm for testing the non-emptiness of base polytopes associated with crossing submodular functions over families that are closed under complement, which may be of independent interest.

\subsubsection*{Organization of the paper}

In Section~\ref{sec:notions}, we present the basic definitions and key results on cooperative games. In Section~\ref{sec:algoRG}, we exhibit a counterexample to an incorrect theorem in~\cite{kuipersPolynomialTimeAlgorithm1996}, thereby demonstrating the flaw in the proposed approach, and we provide a corrected reduced-game procedure that computes the nucleolus in convex games in polynomial time.

The reduced-game approach relies on the iterative resolution of least-core problems, which can be carried out either via the ellipsoid method or through the combinatorial approach presented in~\cite{kuipersPolynomialTimeAlgorithm1996}, itself building on Fujishige’s non-emptiness test for base polytopes associated with crossing submodular functions~\cite{fujishige1984structures,kuipersPolynomialTimeAlgorithm1996,schrijverCombinatorialOptimizationPolyhedra2004,frank2011connections}. In Section~\ref{sec:leastcore}, we present a faster combinatorial method for computing the least-core value. Our approach reformulates the feasibility of a candidate least-core value through Frank’s discrete sandwich theorem~\cite{frankAlgorithmSubmodularFunctions1982a}, combined with submodular function minimization, leading to a simpler structural characterization and improved complexity bounds.

Finally, in Section~\ref{sec:complexity}, we compare our least-core value algorithm with the alternative approach proposed in~\cite{kuipersPolynomialTimeAlgorithm1996} and show that our method improves the running time by a factor of~$n^{3}$. We then analyze the complexity of our new algorithm for computing the nucleolus in convex games. Our preliminary analysis indicates that its oracle complexity improves upon that of the previous algorithm of~\cite{faigle2001computation}. The $n^{3}$ improvement in least-core value computation plays a key role in this result.

\subsection*{Remark}

We further show that one of the algorithmic results from Section~\ref{sec:leastcore} extends to testing the non-emptiness of base polytopes associated with crossing submodular functions over families that are \emph{closed under complement}. This yields a faster procedure than the one derived from Fujishige’s framework~\cite{fujishige1984structures,kuipersPolynomialTimeAlgorithm1996,schrijverCombinatorialOptimizationPolyhedra2004,frank2011connections} in the general case. To the best of our knowledge, this extension is new and may be of independent interest. For readability, we present the corresponding self-contained generalization in Appendix~\ref{app:C}.

\section{Definitions and preliminaries}\label{sec:notions}
{\it In what follows, we adopt the notation of Peleg and Sudh\"olter~\cite{peleg2007introduction};  proofs of standard results omitted here may be found in their book.}
\vspace{1ex}
\global\long\def\core{\mathop{\mathrm{Core}}}%
\global\long\def\leastcore{\mathop{\mathrm{LeastCore}}}%
\global\long\def\excess{\varepsilon}%
\global\long\def\optexcess{\excess^{*}}%
\newcommandx\excessgame[1][usedefault, addprefix=\global, 1=\excess]{v_{#1}}%

A \emph{(profit) cooperative game with transferable utility} (simply called a \emph{game} hereafter) is a pair $(N,v)$, where $N$
is a set of $n$ \emph{players},  representing the agents that can decide to cooperate, while $v\colon\powerset N\rightarrow\mathbb{R}$, with $v(\emptyset)=0$, is the \emph{value  function} of the game. 
The value $v(S)$ can be thought of as the utility (money in many real-world settings) that the players in $S$ can achieve through cooperation. 
Each subset of players $S \subseteq N$ is called a \emph{coalition}. Sometimes we assume without loss of generality that $N=\{1,...,n\}$.

In this paper, we are interested  in games where the marginal value brought by a player increases as the coalition grows.
Formally, the value function $v$ satisfies  $v(S\cup\{i\})-v(S) \leq v(T\cup\{i\})-v(T)$ for all $i\in N$ and $S\subseteq T\subseteq N\setminus\{i\}$, or equivalently, the following holds for all  $S,T\subseteq N$:
\begin{equation}\label{eq:supermodularity}
v\left(S\right)+v\left(T\right)\leq v\left(S\cap T\right)+v\left(S\cup T\right). 
\end{equation}

A set function $v\colon\powerset N\rightarrow\mathbb{R}$ that satisfies the previous property is said to be {\em supermodular}, while (profit) games with a value function that is supermodular are called {\em convex} \cite{shapley1971cores}. 
A set function $v\colon\powerset N\rightarrow\mathbb{R}$ is instead 
said to be \emph{submodular} if $-v$ is supermodular.

In the remainder of the text, we will also use
two other classes of set functions, that satisfy weaker forms of supermodularity. A set function $v\colon\powerset N\rightarrow\mathbb{R}$ is said
to be \emph{intersecting~supermodular} if it satisfies (\ref{eq:supermodularity})
for all $S,T\subseteq N$ such that $S\cap T\neq\emptyset$ and it is said to be {\emph{crossing~supermodular}} if it satisfies (\ref{eq:supermodularity})
for all $S,T\subseteq N$ such that $S\cap T\neq\emptyset$ and $S\cup T\neq N$. Games with intersecting and crossing supermodular value  functions were referenced as {\em near-convex} and {\em pseudo-convex} games in \cite{kuipersPolynomialTimeAlgorithm1996}, but, for clarity, we will refrain from employing this notation in the present work.
Sometimes, we consider the restriction of a set function $v\colon\powerset N\rightarrow\mathbb{R}$  to $2^U$ for some $U\subsetneq N$. We say that the function is intersecting (resp.~crossing) supermodular {\em on $U$} to express the fact that property (\ref{eq:supermodularity}) holds for all $S,T\subseteq U$ such that $S\cap T\neq \emptyset$ (resp.~$S\cap T\neq \emptyset$ and $S\cup T\neq U$).

Convex games are \emph{superadditive}, that is, $v\left(S\right)+v\left(T\right)\leq v\left(S\cup T\right)$ for all $S,T\subseteq N$ with $S\cap T=\emptyset$. 
In particular, the grand coalition $N$ maximizes the total value that can be generated by any family of coalitions partitioning it. Hence, $v(N)$ represents the maximum utility available to the players and is referred to as the \emph{value of the game}. One of the central goals of (superadditive) cooperative game theory is to provide sound \emph{solution concepts} to distribute the benefits of cooperation, that is, this value, among the players. This problem is of major importance, as in real-world scenarios the presence of a sharing rule that is considered \virgolette{fair} by the agents involved has been proven to be a key element for fostering collaboration \cite{bouncken2020value}.
It is easy to distribute $v(N)$ to the different players to guarantee {\em individual rationality} (i.e.\ each player gets at least her individual value)  by splitting the surplus evenly and thus giving each player $i$ the share $v(\{i\}) + 1/n \cdot (v(N)-\sum_{j\in N} v(\{j\}))$. 
However, this simple sharing rule might not be fully satisfactory as some subsets of players might have an incentive to leave the grand coalition if what they receive is less than the value they could get together. 
When no coalition has an incentive to leave $N$, the sharing mechanism is said to be {\em stable}. 

The \emph{core} of a game \cite{gillies1953some} has been established as one of the most relevant solution concepts as it formalizes the aforementioned notion of stability.  We define a \emph{preimputation} as any vector $x \in \mathbb{R}^n$ such that $x(N)=v(N)$, where we use the notation $x(S)=\sum_{i \in S}x_i$ for $S \subseteq N$. 
A preimputation can be seen as a possible sharing of the total gain obtained when all players cooperate: $x_i$ is the share of $v(N)$ received by player $i$. 
Given a game $(N,v)$, the \emph{core} of $(N,v)$ is the set of stable preimputations:
\begin{equation*}
	\core(N,v) \coloneqq \left\{ x \in \mathbb{R}^n \setdef x(S) \ge v(S) \quad \forall S \subseteq N, \quad x(N)=v(N) \right\}.
\end{equation*}

We note that the core of a game is a polyhedron, and in the general case may be empty. For convex games, the core is closely related to the base polytope of an extended polymatroid and it has the property of always being non-empty \cite{shapley1971cores}.
In particular, an extreme point can be computed with the greedy algorithm by setting, for $i=1,...,n$, $x_{i} \algset v( \{1,...,i\}) - v(\{1,...,i-1\})$. 
Note that when dealing with computational aspects of games, we usually assume that the value function $v$ is given through an evaluation oracle, denoted as $\oracletime[v]$, and we analyze algorithmic complexity in terms of the number of calls to this evaluation oracle plus possibly additional operations, as a function of the number of players $n$. 
For instance, the greedy algorithm has complexity $O(n\cdot \oracletime[v])$.

%
%
%
%
%
%
%
%

The {\em prenucleolus} is a single point solution concept \cite{schmeidler1969nucleolus}  defined as follows.  Given a game $(N,v)$ and a preimputation $x\in \mathbb{R}^n$, let us denote by $\theta(x,v) \in \mathbb{R}^{2^n-2}$ the vector whose components are $x(S)-v(S)$ $\forall S \subsetneq N$, $S\neq \emptyset$ sorted in non-decreasing order.
The \emph{prenucleolus} $\eta(N,v)$ of $(N,v)$ is the (unique) preimputation that lexicographically maximizes $\theta(x,v)$ over the set of preimputations, i.e.\
\begin{equation*}
\eta(N,v)\coloneqq \mathop{\arg \mathrm{lex} \max_{x\in\mathbb{R}^{n}}} \left\{\theta(x,v) \setdef x(N)=v(N)\right\}.
\end{equation*}

The prenucleolus is contained in the {\em least core}, which consists in the set of  preimputations that maximize the lowest value $\optexcess$ of $x(S)-v(S)$ for $S \subsetneq N$, $S \neq \emptyset$ over all possible preimputations $x$ (i.e., that minimize the largest ``dissatisfaction''). 
The value $\optexcess$ is referred to
as the \emph{least core value}. 
It can be computed through the following linear program: 
\begin{equation}\label{eq:least-core}
\begin{alignedat}{3}
 \optexcess = \ & \text{maximize} & \excess\\ 
 & \text{subject to}\qquad & x\left(S\right) & \geq v\left(S\right)+\excess & \qquad &\forall S\subsetneq N,\; S\neq\emptyset \\
 &  & x\left(N\right) & =v\left(N\right), 
\end{alignedat}
\end{equation}
while the least core is formally defined as:
\[
\leastcore\left(N,v\right)\coloneqq\left\{ x\in\mathbb{R}^{n}\setdef\left(x,\optexcess\right)\text{ is an optimal solution to (\ref{eq:least-core})}\right\} .
\]

Problem (\ref{eq:least-core}) is always feasible and bounded; thus, the least core is never empty.
When the core is non-empty, we have by definition that $\optexcess \geq 0$ and $\leastcore(N,v) \subseteq \core(N,v)$, implying that the prenucleolus belongs to the core. The \emph{nucleolus} is defined analogously to the prenucleolus, with the maximization restricted to individually rational preimputations. 
When the core is non-empty---as is the case, for example, of convex games---prenucleolus and nucleolus coincide~\cite{peleg2007introduction}. Therefore, in what follows, we use the notation $\eta(N,v)$ to denote both concepts.

\paragraph{Remark}
In the following, when $N$ is clear from the context, we will slightly abuse notation and use $\core(v)$ (resp.~$\eta(v)$, $\leastcore(v)$), instead of $\core(N,v)$ (resp.~$\eta(N,v)$, $\leastcore(N,v)$).\\

Identifying a point in the least core is NP-hard in general \cite{faigle2000note}. Nonetheless, for convex games, problem (\ref{eq:least-core}) can be solved in (oracle) polynomial time using the ellipsoid method \cite{faigle2001computation}. 
This efficiency arises because the associated separation problem --- verifying whether a given solution $(\bar{x}, \bar{\excess})$ satisfies $\bar{x}(S) \geq v(S) + \bar{\excess}$ for every non-empty $S \subsetneq N$, and identifying a violated inequality if the condition fails --- is polynomially solvable.
Indeed, maximizing a supermodular function (equivalently, minimizing a submodular function) can be performed in polynomial time \cite{grotschel1981ellipsoid} and even in strongly polynomial time \cite{iwata2001combinatorial}.
Such algorithms can effectively minimize the submodular function $\bar{x}(S)-v(S)$ over all subsets $S$ containing a particular element $i$ and excluding another element $j$, for arbitrary $i,j \in N$.
Consequently, it suffices to verify whether the minimum value computed over all possible pairs $(i,j)$ is at least $\bar{\excess}$ (and provide a minimizing subset $S$ if this condition is violated).

The dual of problem (\ref{eq:least-core}) is
\begin{equation}\label{eq:least-core-dual}
\begin{alignedat}{3}
 & \text{minimize} & \sum_{S\subsetneq N, S\neq\emptyset}\mu_{S}\cdot v\left(S\right) & +\mu_{N}\cdot v\left(N\right)\\ 
 & \text{subject to}\qquad & \mu_{S} & \leq0 & \quad & \forall S\subsetneq N,\; S\neq\emptyset \\
 &  & \sum_{ S\subsetneq N:s \in S}\mu_{S}+\mu_{N} & =0 & \quad & \forall s\in N \\
 &  & \sum_{S\subsetneq N, S\neq\emptyset}\mu_{S} & =-1. 
\end{alignedat}
\end{equation}
Since (\ref{eq:least-core})  and  (\ref{eq:least-core-dual}) are both feasible, the optimal value of (\ref{eq:least-core-dual}) is equal to
the least core value $\optexcess$. 
Note also that, in an optimal dual solution, there is always a non-empty set $S\subsetneq N$ such that $\mu_S<0$. Now, by complementary slackness, for such $S$, any optimal solution $(x, \varepsilon^*)$ to (\ref{eq:least-core}) satisfies $x(S)=v(S)+\optexcess$. We call such a coalition $S$ an {\em essential coalition} of the game $(N,v)$. 
For convex games, it is also possible to retrieve such an essential coalition in polynomial time through the ellipsoid method \cite{faigle2001computation}. 
In Section~\ref{sec:algoRG}, we demonstrate how to exploit this property inductively to compute the (pre)nucleolus in convex games, utilizing the concept of \emph{reduced game}, as introduced in \cite{davis1965kernel}. We now present some fundamental properties of these games.

Reduced games allow to reduce the computation of the prenucleolus to a smaller subset of players once some components are known for the other players.
Assume that  we are given a game $(N,v)$ and we know the components $\eta_j$ of its prenucleolus $\eta$ for all $j\in N\setminus S$ for some $ S\subsetneq N$, $S \neq \emptyset$.  Consider the game $(S,v_{S,\eta})$ with: 
	\begin{equation*}
		v_{S, \eta}(T)=\begin{cases}
			0 \qquad &\text{if }T=\emptyset\\
			v(N)-\eta(N \setminus S) \qquad &\text{if }T=S\\
			\max_{R \subseteq N\setminus S}\Big(v(T \cup R)-\eta(R)\Big)\qquad \qquad &\forall T \subsetneq S, T \neq \emptyset.
		\end{cases}
	\end{equation*}
The game $(S, v_{S, \eta})$ is the \emph{reduced game} with respect to $S$ and $\eta$ (note that a reduced game can also be defined with respect to a preimputation other than $\eta$). 
Whenever $\eta$ is in the core of the original game $(N,v)$ (that is, when the core is non-empty and the prenucleolus and nucleolus coincide,  as in convex games), we can give a simpler expression for the value function of any reduced game with respect to $\eta$ (see Proposition~\ref{prop:reducedxcore} in Appendix~\ref{sec:appendix_proofs}) i.e.
\begin{equation}\label{eq:reducedxcore}
v_{S, \eta}(T)=\max_{R \subseteq N\setminus S}\Big(v(T \cup R)-\eta(R)\Big)\qquad \forall T\subseteq S.
\end{equation}

The prenucleolus exhibits an interesting \emph{consistency} property for reduced games \cite{peleg2007introduction}.
In fact, the prenucleolus of such a reduced game is simply the projection of the original prenucleolus on the remaining players.
\begin{proposition}\cite{peleg2007introduction}\label{prop:reducednucleolus}
	Let $(N,v)$ be a game, and $\bar{\eta}$ its prenucleolus.
    Then, $\forall S \subsetneq N$, $S\neq \emptyset$
	\begin{equation*}
		\big(\eta(S,v_{S, \bar{\eta}})\big)_j=\bar{\eta}_j \qquad \forall j \in S.
	\end{equation*}
\end{proposition}

We now present some results about reduced games for convex games.
The following proposition is a corollary of Theorem 5.2 in \cite{maschler1971kernel} and Theorem 2.2 in \cite{arin1998characterization}. We give a proof of a slightly more general result in Appendix~\ref{sec:appendix_proofs} (see Proposition~\ref{prop:reducedconvex}).

\begin{proposition}\cite{maschler1971kernel, arin1998characterization}\label{prop:reducedconvex2}
	Let $(N,v)$ be a  convex game and $\eta$ its (pre)nucleolus.
	For all $S \subsetneq N$, $S\neq \emptyset$, the reduced game $(S, v_{S, \eta})$ is convex.
\end{proposition}

As observed earlier, for convex games, the least core value $\optexcess$ and an essential coalition $S$ can be computed in polynomial time via the ellipsoid method and submodular function minimization. For any point $x$ in the least core, we have $x(S) = v(S) + \optexcess$. In particular, if $\eta$ denotes the (pre)nucleolus, then $\eta(S) = v(S) + \optexcess$ is determined. 
Additionally, since $\eta(N) = v(N)$, we also obtain $\eta(N \setminus S)$. 

While the individual components of $\eta$ for players in $S$ and $N \setminus S$ remain unknown, Proposition~\ref{prop:reducedmax2} establishes that no further information is required to define the reduced games with respect to $S$ and $N \setminus S$ for convex games. 
\begin{proposition}\cite{maschler1971kernel}\label{prop:reducedmax2}
	Let $(N,v)$ be a  convex game, $S$ an essential coalition of $(N,v)$, and $\eta$ its (pre)nucleolus. 
	We have:
\begin{equation*}
    \begin{alignedat}{2}
      v_{S, \eta}(T) &= \max \{v(T),\; v(T \cup (N \setminus S))-\eta(N \setminus S)\} 
    && \ \ \ \ \forall T \subseteq S, \\
    v_{N \setminus S, \eta}(T) &= \max \{v(T),\; v(T \cup S)-\eta(S)\} 
    && \ \ \ \ \forall T \subseteq N \setminus S. 
    \end{alignedat}
\end{equation*}
\end{proposition}

This proposition is a simplified form of Lemma 5.7 in \cite{maschler1971kernel}. However, it does not appear explicitly in that form in \cite{maschler1971kernel}, and extracting it is not immediate.  
This result is crucial, as it corrects a fundamental mistake in \cite{kuipersPolynomialTimeAlgorithm1996}, who erroneously claimed that \cite{maschler1971kernel} proves the stronger but Incorrect Theorem~\ref{prop:kuiperswrong} (see Section~\ref{sec:algoRG}). For completeness, we provide a short and direct proof of a more general result in the Appendix~\ref{sec:appendix_proofs} (see Proposition~\ref{prop:reducedmax}). Interestingly, while writing this paper, we discovered that a result akin to Proposition~\ref{prop:reducedmax} had already been identified and utilized in \cite{peleg2007introduction} to demonstrate that the prekernel of a convex game coincides with the (pre)nucleolus (Lemma 5.7.2 in \cite{peleg2007introduction}).


\section{A natural reduced game approach for Nucleolus in convex games}\label{sec:algoRG}

The results from the previous section suggest the following procedure (Algorithm~\ref{alg:pseudokuipers}) for computing  for any $i\in N$ the corresponding component $\eta_i$ of the nucleolus.
This approach was suggested in \cite{kuipersPolynomialTimeAlgorithm1996}, while the key ideas behind this approach can already be traced back to \cite{maschler1971kernel}. In the algorithm, we maintain the invariant that each $(N^{k},v^{k})$ is convex and corresponds to a reduced game induced by the nucleolus. 

\begin{algorithm}
	\caption{Nucleolus through reduced games}\label{alg:pseudokuipers}
	\begin{algorithmic}[1]
    \begin{inputblock}
    A convex game $\left(N,v\right)$ given through an evaluation oracle
    \end{inputblock}
    \begin{outputblock}
    $\eta$, the nucleolus of $(N,v)$
    \end{outputblock}
		\For  {$i \in  N$}\label{line:begin-for-nucl1}
        \State $k\algset 1$
        \State $v^k\algset v$
        \State $N^k\algset N$
		\While{$\abs{N^k} >1$}\label{line:begin-while-nucl1}
		\State Compute an essential coalition ${S}^{k}$ and the least core value $\varepsilon^k$ of game $(N^k,v^k)$ \label{line:essential-coal-nucl1}
		\State Define $N^{k+1}\algset {S}^{k}$ if $i\in {S}^{k}$, and 
		$N^{k+1}\algset N^k \setminus {S}^{k}$ otherwise
        \State Define $T_k\coloneqq N^{k}\setminus N^{k+1}$ and $\eta^k\coloneqq\eta(N^k,v^k)$ \label{line:define-T_k} 
		\State Define $v^{k+1}\algset v^{k}_{N^{k+1},\eta^k}$ as $v^{k+1}(T)= \max\{v^{k}(T), \; v^{k}(T \cup T^k) - \eta^k(T^k) \} \ \ \forall T\subseteq N^{k+1}$\label{line:define-v_k+1}
		\State $k\algset k+1$
		\EndWhile \label{line:end-while-nucl1}
		\State $\eta_i\algset v^k(N^k)$ 
		\EndFor \label{line:end-for-nucl1}
        \returnstmt{$\eta$}
	\end{algorithmic}
\end{algorithm}

Let us assess the correctness of Algorithm~\ref{alg:pseudokuipers}.
Consider $i\in N$. We first note that, since $(N^1, v^1)=(N,v)$ is convex and $(N^{k+1},v^{k+1})$ is defined as the reduced game of $(N^{k},v^{k})$ with respect to $N^{k+1}$ and $\eta^k$ for all $k$, then, by Proposition~\ref{prop:reducedconvex2},  $(N^{k},v^{k})$ is convex for all $k$.
Furthermore, by Proposition~\ref{prop:reducednucleolus}, the components of the nucleolus computed in the reduced games are equal to the corresponding components of the original nucleolus $\eta$, that is $\eta^k_j=\eta_j$ for all $j\in N^{k}$. 
In step~\ref{line:define-T_k}, we have access to $\eta^k({T}^{k})=\eta({T}^{k})$ due to the fact that ${S}^{k}$ is an essential coalition, and thus we have $\eta^k(S^{k})=v^k(S^{k})+\varepsilon^k$ and $\eta^k(N^k\setminus S^{k})=v^k(N^k) - (v^k(S^{k})+\varepsilon^k) $. 
Step~\ref{line:define-v_k+1} comes from Proposition~\ref{prop:reducedmax2}. 
Note also that when $\abs{N^k}\le1$ we have indeed $\abs{N^k}=1$, as any essential coalition $S^k$ of $(N^k, v^k)$ satisfies $S^k \neq \emptyset, N^k$.
Observe that step~\ref{line:essential-coal-nucl1} can be done efficiently with the ellipsoid method  \cite{faigle2001computation}, {\em if} it is possible to evaluate $v^k$ efficiently. 
Since $\eta^k(T^k)=\eta(T^k)$, we can equivalently  evaluate $v^{k+1}(T)$ for any $T \subseteq N^{k+1}$ as
\begin{equation}\label{eq:induc}
    v^{k+1}(T)= \max\{v^{k}(T), \; v^{k}(T \cup T^k) - \eta(T^k) \}.
\end{equation}

with $\eta(T^j)$ known for $j=1,...,k$ at step $k$.
If implemented naively, $v^{k+1}(T)$ will involve $O(2^k)$ evaluations of $v$. In order to maintain polynomiality, Theorem 12 in \cite{kuipersPolynomialTimeAlgorithm1996} claims that:

\begin{assertion}\cite{kuipersPolynomialTimeAlgorithm1996}\label{prop:kuiperswrong}
	Let $(N, v)$ be a convex game, let $U \subsetneq N$ be such that $\mbox{y(U)=z(U)}$ $\forall y, z \in \leastcore(v)$, and let $x \in \core(v)$.
	Then for all $S \subsetneq U, S \neq  \emptyset$ we have
	\begin{equation*}
		\max_{Q\subseteq N\setminus U} \bigl(v(S\cup Q) - x(Q)\bigr) = \max\{v(S), \; v(S \cup N\setminus U) - x(N\setminus U)\}.
	\end{equation*}
\end{assertion} 

Incorrect Theorem~\ref{prop:kuiperswrong} is used in \cite{kuipersPolynomialTimeAlgorithm1996} to evaluate $v^{k+1}(T)$  as $\max \{v(T), \; v(T  \cup \bigcup_{j=1}^k T^j) - \sum_{j=1}^{k} \eta(T^j)\}$, for $T\subseteq N^{k+1}$.
Unfortunately, Incorrect Theorem~\ref{prop:kuiperswrong} is not valid (a counterexample is given in Appendix~\ref{sec:appendix_counterexample}). 
We thus start by fixing this issue by providing a procedure to evaluate  $v^{k+1}(T)$ efficiently through Proposition~\ref{prop:reducedmax2}. 
Note that, by induction on $k$, we can derive the following formula from (\ref{eq:induc}), with $[k]=\{1, \ldots, k\}$ for any integer $k\ge 1$ and $[0]=\emptyset$:
\begin{equation}\label{eq:usualtrick}
    v^{k+1}(T)= \max_{J \subseteq [k]} \Big(v(T \cup \bigcup_{j\in J} T^j) - \sum_{j\in J} \eta(T^j)\Big) =: \max_{J \subseteq [k]}  f_{T, T_1,...,T_k}(J).
\end{equation}
Now observe that, for all $T, T_1,...,T_k$ disjoint, the set function $f_{T, T_1,...,T_k}: J\subseteq [k] \longmapsto f_{T, T_1,...,T_k}(J)  $ is supermodular.
In fact by exploiting the supermodularity of $v$ we have $\forall I, J \subseteq [k]$
\begin{equation*}
    \begin{alignedat}{2}
    f_{T, T_1,...,T_k}(I)+f_{T, T_1,...,T_k}(J)
    &=v\Big(T \cup \bigcup_{j\in I} T^j\Big) - \sum_{j\in I} \eta(T^j)+v\Big(T \cup \bigcup_{j\in J} T^j\Big) - \sum_{j\in J} \eta(T^j)\\
    &\le v\Big(T \cup \bigcup_{j\in I \cup J} T^j\Big) +v\Big(T \cup \bigcup_{j\in I \cap J} T^j\Big) - \sum_{j\in I \cup J} \eta(T^j) - \sum_{j\in I \cap J} \eta(T^j)\\
    &=f_{T, T_1,...,T_k}(I \cup J)+f_{T, T_1,...,T_k}(I \cap J).
    \end{alignedat}
    \end{equation*}
Therefore $v^{k+1}(T)$ can be computed through submodular function minimization. This proves that it is possible to compute the nucleolus of $(N,v)$ in (oracle) polynomial time through Algorithm~\ref{alg:pseudokuipers}. 

In \cite{kuipersPolynomialTimeAlgorithm1996} a combinatorial approach for step~\ref{line:essential-coal-nucl1} in Algorithm~\ref{alg:pseudokuipers} is proposed.
This approach relies on solving a sequence of feasibility tests for different values of~$\varepsilon$, cubic in~$n$.
Each such test reduces to a non-emptiness test of the base polytope of extended polymatroids associated with crossing submodular functions. 
A combinatorial algorithm for this problem was already available in~\cite{fujishige1984structures}, and a similar approach is described in~\cite{schrijverCombinatorialOptimizationPolyhedra2004}.
In the following section, we propose a simpler and more efficient combinatorial
algorithm for least-core computation. This leads to the following
result.

\begin{theorem}
There exists a combinatorial algorithm to compute the nucleolus of convex games in oracle time 
$O(n^3 \cdot (p_n)^3 \cdot \oracletime[v])$, where $p_n$ denotes the best-known polynomial bound on the number of value-oracle calls required for submodular function minimization, and $\oracletime[v]$ is the time complexity of a single oracle evaluation of the set function~$v$.
\end{theorem}

\noindent We now turn to the main combinatorial ingredient enabling this framework.

\section{The Least Core of Convex Games Combinatorially}\label{sec:leastcore}
In this section we present a new strongly polynomial-time combinatorial algorithm for solving
Problem (\ref{eq:least-core}) and retrieving an essential coalition.   Observe that $\optexcess$ is the maximum $\varepsilon \in \mathbb{R}$ such that the polyhedron defined by the following system is non-empty:
\begin{equation}\label{eq:nonemptiness}
    \begin{alignedat}{2}
x\left(S\right) & \geq v(S) + \excess& \qquad\forall S\subsetneq N,\; S\neq\emptyset \\
x\left(N\right) & =v\left(N\right). 
\end{alignedat}
\end{equation}
We start by showing how to test whether such a polyhedron is non-empty for a given value of $\varepsilon$.
Then we describe how to build upon this procedure to identify $\optexcess$ efficiently (and combinatorially).

\subsection{Non-Emptiness testing\label{subsec:crossing-supermodular-core-non-emptiness}}
Observe first that the function $v_\excess$, defined as 
$v_\excess(S)\coloneqq v(S)+\varepsilon$ for all $S\subsetneq N$, $S\neq\emptyset$, 
with $v_\excess(\emptyset)=0$ and $v_\excess(N)=v(N)$, is crossing supermodular. 
Hence testing the feasibility of (\ref{eq:nonemptiness}) reduces to testing the 
non-emptiness of the core of a game with a crossing supermodular value function.

This problem can be solved in strongly polynomial time using an algorithm by 
Fujishige \cite{fujishige1984structures} for testing the non-emptiness of the base 
polytope of an extended polymatroid associated with a crossing submodular function 
(see also \cite{schrijverCombinatorialOptimizationPolyhedra2004}). 
This is also the approach taken in \cite{kuipersPolynomialTimeAlgorithm1996}. 

Our goal here is to avoid this framework by reformulating the feasibility 
problem as a discrete sandwich problem. This viewpoint leads to a simpler structural 
characterization and ultimately improves the complexity of the least-core computation 
developed later. The key ingredient is Theorem 
\ref{prop:core-non-emptiness-condition}, which builds upon 
Theorem~\ref{thm:Franks-discrete-sandwich-theorem} (Frank’s discrete sandwich theorem 
\cite{frankAlgorithmSubmodularFunctions1982a}) and provides a more direct 
non-emptiness test.

\begin{theorem}\cite{frankAlgorithmSubmodularFunctions1982a} 
\label{thm:Franks-discrete-sandwich-theorem}
Let $V$ be a finite set, let $f:2^V\rightarrow \mathbb R$ be a supermodular function, and let $g:2^V\rightarrow \mathbb R$ be a submodular function.
Then there exists a vector $x\in\mathbb{R}^{|V|}$ satisfying
\[
f\left(U\right)\leq x\left(U\right)\leq g\left(U\right)\qquad\forall U\subseteq V,
\]
if and only if $f\left(U\right)\leq g\left(U\right)$ for all $U\subseteq V$ (or equivalently $\min_{U\subseteq V}\left(g\left(U\right)-f\left(U\right)\right)\geq0$).
\end{theorem}

Observe now that for a game $(N,v)$, the inequality $x(S)= \sum_{P\in {\mathcal P}} x(P)\geq  \sum_{P\in {\mathcal P}} v(P)$ is valid for $\core\left(v\right)$, for any $\emptyset\neq S\subseteq N$ and any (proper) partition $\mathcal P$ of $S$. 
Hence we can equivalently study non-emptiness of $\core\left(\supaddcover v\right)$, where the set function $\supaddcover v$, defined by 
$\supaddcover v\left(S\right)\coloneqq\max \big\{ \sum\nolimits_{P\in\mathcal{P}}v\left(P\right)\; \big|\;\mathcal{P}\subseteq\powerset N \allowbreak \text{proper partition of \ensuremath{S}}\big\} $ for all $ \emptyset \neq S\subseteq N$ and $\supaddcover v (\emptyset)=0$, is called the {\em superadditive cover} of $v$.
It is easy to prove that the superadditive cover is indeed superadditive.
An equivalent concept was used in \cite{fujishige1984structures} and \cite{schrijverCombinatorialOptimizationPolyhedra2004} to test the non-emptiness of the base polytope of extended polymatroids associated with intersecting and crossing submodular functions.  
Of course for a convex game $(N,v)$ we have $\supaddcover v = v$, due to the superadditivity of $v$. 
It is known that the use of superadditive cover allows to retrieve supermodularity when $v$ is intersecting supermodular. Besides, $\supaddcover v(S)$ can be computed in polytime by a greedy algorithm using submodular function minimization as a subroutine. These properties were proven by \cite{fujishige1984structures} and \cite{schrijverCombinatorialOptimizationPolyhedra2004} in the submodular setting. 
We give a direct proof in Appendix~\ref{sec:appendix_proofs} for sake of completeness (see Proposition~\ref{prop:convexity-superadditive-cover-near-convex-games}). 

\begin{proposition}
\label{prop:convexity-superadditive-cover-near-convex-games2}\cite{fujishige1984structures}\cite{schrijverCombinatorialOptimizationPolyhedra2004}  Let
$v\colon\powerset V\rightarrow\mathbb{R}$ be an intersecting supermodular function, then its superadditive cover $\supaddcover v$ is supermodular on $V$. 
Moreover, $\supaddcover v(S)$ and a  proper partition $\mathcal T$ of $S$ such that $\supaddcover v\left(S\right)=\sum\nolimits_{T\in\mathcal{T}}v\left(T\right)$ can be computed in (oracle) strongly polynomial time for all $S\subseteq V$.
\end{proposition}

We can now state the main result of this section, which shows that testing the non-emptiness of~(\ref{eq:nonemptiness}) for a given $\varepsilon$, which is equivalent to checking whether $\core\left(\excessgame\right)$ is non-empty, can be done in polytime via an approach that is both different from and faster than Fujishige's \cite{fujishige1984structures,kuipersPolynomialTimeAlgorithm1996,schrijverCombinatorialOptimizationPolyhedra2004,frank2011connections}. We analyze the computational complexity in Section~\ref{sec:complexity}. The corresponding result can be extended to testing non emptiness of base polytope associated with crossing submodular functions on families closed under complement (see Theorem~\ref{thm:sandwich-non-emptiness-condition}).

\global\long\def\Ns{N_{s}}%

\begin{theorem}\label{thm:empty}
\label{prop:core-non-emptiness-condition} Let $(N,w)$ be a game with $w\colon\powerset N\rightarrow\mathbb{R}$
 crossing supermodular. Then for any $s\in N$,
it holds that
\[
\core\left(w\right)\neq\emptyset\qquad\text{if and only if}\qquad\min_{S\subseteq N\setminus\left\{ s\right\} }\bigl(-\supaddcover w\left(S\right)-\supaddcover u\left(S\right)\bigr)\geq0,
\]
with $u\left(S\right)\coloneqq w\left(N\setminus S\right)-w\left(N\right)$ for all $S \subseteq N \setminus \{s\}$
and $\supaddcover w$, $\supaddcover u$ superadditive covers of $w$, $u$ respectively. 
Moreover, $-\supaddcover w(S)-\supaddcover u(S)$ is submodular and thus $\min_{S\subseteq N\setminus\{ s\} }\left(-\supaddcover w\left(S\right)-\supaddcover u\left(S\right)\right)$ can be computed in (oracle) strongly polynomial time through submodular function minimization.
\end{theorem}
\begin{proof}
For simplicity, we denote $N\setminus\left\{ s\right\} $ as $\Ns$.
By explicitly expressing the component $x_{s}$,
the system of inequalities defining $\core\left(w\right)$ can be
reformulated as follows.
\[
\begin{cases}
\phantom{x\left(N\right)}\mathllap{x\left(S\right)}\geq w\left(S\right) & \forall S\subseteq N,\\
x\left(N\right)=w\left(N\right)
\end{cases}\quad\iff\quad\begin{cases}
\phantom{x\left(\Ns\right)+x_{s}}\mathllap{x\left(S\right)}\geq w\left(S\right) & \forall S\subseteq\Ns,\\
\phantom{x\left(\Ns\right)+x_{s}}\mathllap{x\left(S\right)+x_{s}}\geq w\left(S\cup\left\{ s\right\} \right) & \forall S\subseteq\Ns,\\
x\left(\Ns\right)+x_{s}=w\left(N\right).
\end{cases}
\]
By subtracting the second constraint into the last, we obtain:
\[
\left(x\left(\Ns\right)+\bcancel{x_{s}}\right)-\left(x\left(S\right)+\bcancel{x_{s}}\right)=x\left(\Ns\setminus S\right)\leq w\left(N\right)-w\left(S\cup\left\{ s\right\} \right)\qquad\forall S\subseteq\Ns,
\]
and by taking $S'=\Ns\setminus S$ and consequently $S\cup\left\{ s\right\} =\left(\Ns\setminus S'\right)\cup\left\{ s\right\} =N\setminus S'$,
such inequalities become
\[
x\left(S'\right)\leq w\left(N\right)-w\left(N\setminus S'\right)\qquad\forall S'\subseteq\Ns.
\]
The original system can thus be rewritten as 
\[
\begin{cases}
\phantom{x\left(\Ns\right)}\mathllap{x\left(S\right)}\geq w\left(S\right) & \forall S\subseteq\Ns,\\
\phantom{x\left(\Ns\right)}\mathllap{x\left(S\right)}\leq w\left(N\right)-w\left(N\setminus S\right) \quad & \forall S\subseteq\Ns,\\
x\left(\Ns\right)+x_{s}=w\left(N\right).
\end{cases}
\]
Since no inequalities of the system involve $x_{s}$,
it is uniquely determined by $x_{s}=w\left(N\right)-x\left(\Ns\right)$
for any instantiation of all other components of $x$. The feasibility
of the system (and non-emptiness of $\core\left(w\right)$) is therefore
equivalent to the existence of a vector $x$ in the subspace $\mathbb{R}^{|\Ns|}$
of $\mathbb{R}^{n}$ for which
\begin{equation}
w\left(S\right)\leq x\left(S\right)\leq w\left(N\right)-w\left(N\setminus S\right)=-u(S)\qquad\forall S\subseteq\Ns.\label{eq:non-emptiness-condition-crossing-supermodular}
\end{equation}
Any $x$ satisfying (\ref{eq:non-emptiness-condition-crossing-supermodular})
must also satisfy each of the following 
\begin{align}
\max_{\text{\ensuremath{\mathcal{P}} prop.\ part.\,of \ensuremath{S}}}\sum\nolimits_{P\in\mathcal{P}}w\left(P\right) & \leq x\left(S\right)\leq\min_{\text{\ensuremath{\mathcal{Q}} prop.\ part.\,of \ensuremath{S}}}\sum\nolimits_{Q\in\mathcal{Q}}-u(Q)& \forall S\subseteq\Ns,\nonumber \\
\supaddcover w(S) & \leq x(S) \leq - \supaddcover u(S)& \forall S\subseteq\Ns,\label{eq:non-emptiness-conditionich-superadditive-cover}
\end{align}
and vice versa.  We claim now that both $u$ and $w$ are crossing supermodular. This is by hypothesis for $w$. Let us prove it for $u$. Let $S,T\subseteq N$ such that $S\cap T\neq\emptyset$ and $S\cup T\neq N$,
then we have:
\begin{equation*}
    \begin{alignedat}{1}
   u\left(S\cap T\right)+u\left(S\cup T\right) & =\Big(w\big(N\setminus\left(S\cap T\right)\big)-w\left(N\right)\Big)+\Big(w\big(N\setminus\left(S\cup T\right)\big)-w\left(N\right)\Big)\\
 & =w\big(\left(N\setminus S\right)\cup\left(N\setminus T\right)\big)+w\big(\left(N\setminus S\right)\cap\left(N\setminus T\right)\big)-2\cdot w\left(N\right)\\
 & \geq\Big(w\left(N\setminus S\right)-w\left(N\right)\Big)+\Big(w\left(N\setminus T\right)-w\left(N\right)\Big)=u\left(S\right)+u\left(T\right).     
    \end{alignedat}
\end{equation*}

Since both $u$ and $w$ are crossing supermodular,
they are
 intersecting supermodular {\em on $N_s$}.  Then, by Proposition~\ref{prop:convexity-superadditive-cover-near-convex-games2},  $\supaddcover w(S)$ and $\supaddcover u(S)$ are supermodular on  $N_s$ and so  $-\supaddcover w(S)-\supaddcover u(S)$ is submodular on $N_s$. Thanks to Frank's discrete sandwich theorem  (Theorem~\ref{thm:Franks-discrete-sandwich-theorem}),
the existence of an $x$ satisfying (\ref{eq:non-emptiness-conditionich-superadditive-cover})
is equivalent to the condition
\[
\min_{S\subseteq\Ns}\bigl(-\supaddcover w\left(S\right)-\supaddcover u\left(S\right)\bigr)\geq0,
\]
which can be computed in polytime using submodular function minimization.
\end{proof}

\subsection{A simple Farkas' Lemma-driven procedure to search for  \texorpdfstring{$\optexcess$}{TEXT}}\label{sec:farkas}

Farkas' Lemma \cite{farkas1902theorie} states that exactly one of the two systems below have a solution for any given~$\excess$.
\begin{center}
\begin{minipage}[c]{0.45\columnwidth}%
\begin{align}
x(S) & \geq v(S)+\excess\quad\forall S\subsetneq N,S\neq\emptyset\nonumber \\
 & \tag{P}\label{eq:primal-system}\\
x(N) & =v(N)\nonumber 
\end{align}
\end{minipage}\hspace{0.05\columnwidth}%
\begin{minipage}[c]{0.50\columnwidth}%
\begin{align}
\sum_{\hspace*{1.25em}\mathclap{S\subseteq N:s\in S}\hspace*{1.25em}}\lambda_{S} & =0\quad\forall s\in N\nonumber \\
\sum_{\hspace*{1.25em}\mathclap{S\subseteq N, S \neq \emptyset}\hspace*{1.25em}}\lambda_{S} & \cdot(v(S)+\excess)+\lambda_{N}\cdot v(N)>0\tag{D}\label{eq:dual-system}\\
\lambda_{S} & \geq0\quad\forall S\subsetneq N,S\neq\emptyset\nonumber. 
\end{align}
\end{minipage}
\end{center}
Now observe that when $\excess$ yields to infeasibility of the first system (\ref{eq:primal-system}), Theorem~\ref{thm:empty} provides a solution (Farkas certificate) to (\ref{eq:dual-system}). 
Indeed, defining $u_\excess\left(S\right)\coloneqq v_\excess\left(N\setminus S\right)-v_\excess\left(N\right)$ for all $S\subseteq N\setminus \{s\}$ for some $s\in N$, 
we have $\supaddcover{\excessgame[\excess]}\left(\overline{S}\right)+\supaddcover{u_{\excess}}\left(\overline{S}\right)>0$ for some $\overline{S}\subseteq N \setminus \{s\}$, $\overline{S} \neq \emptyset$
where $\supaddcover{u_{\excess}}$, $\supaddcover{v_{\excess}}$ are the superadditive covers of $u_{\excess}$, $v_{\excess}$ respectively (that can be computed in polytime due to Proposition~\ref{prop:convexity-superadditive-cover-near-convex-games2}).  
This is equivalent to 
\begin{equation}
-\left|\mathcal{Q}\right|v\left(N\right)+\sum_{P\in\mathcal{P}}v\left(P\right)+\sum_{Q\in\mathcal{Q}}v\left(N\setminus Q\right)+\left(\left|\mathcal{P}\right|+\left|\mathcal{Q}\right|\right)\cdot\excess>0,\label{eq:continuing-condition}
\end{equation}
for $\mathcal P$ and $\mathcal Q$ the proper partitions of $\overline{S}$ such that $\supaddcover{v_{\excess}}\left(\overline{S}\right) = \sum_{P\in {\mathcal P}} (v(P)+ \excess)$ and  
$\supaddcover{u_{\excess}}\left(\overline{S}\right) = {\sum_{Q\in {\mathcal Q}} (v(N\setminus Q) - v(N) + \excess)} $ (that again can be computed in polytime due to Proposition~\ref{prop:convexity-superadditive-cover-near-convex-games2}).  
A Farkas certificate is thus (as each player is covered $|{\mathcal Q}|$ times by the elements of the family \mbox{$\mathcal S\coloneqq\mathcal P\cup \{N\setminus Q \mid Q\in \mathcal Q\}$)}
\[
\lambda_{S}=\begin{cases}
1 & \text{if \ensuremath{S\in\mathcal{P}} or \ensuremath{N\setminus S\in\mathcal{Q}}},\\
-|{\mathcal Q}| \quad& \text{if \ensuremath{S=N}}\\
0 & \text{otherwise},
\end{cases}
\]
which also shows that
\[
\optexcess\leq\frac{1}{\left|\mathcal{P}\right|+\left|\mathcal{Q}\right|}\Bigl(\left|\mathcal{Q}\right|v\left(N\right)-\sum_{P\in\mathcal{P}}v\left(P\right)-\sum_{Q\in\mathcal{Q}}v\left(N\setminus Q\right)\Bigr).
\]

It is natural to use this upper bound as the next candidate for $\varepsilon$.
This motivates Algorithm~\ref{alg:least-core}, which performs a monotone
one-dimensional search on the least core value $\varepsilon$, where infeasibility
certificates derived from Farkas’ lemma are used to update the candidate bound.

\begin{algorithm}
\begin{algorithmic}[1]
\begin{inputblock}
A convex game $\left(N,v\right)$ given by an evaluation oracle and a bound $M > \max_{T\subseteq N} \abs{v(T)}$

\end{inputblock}

\begin{outputblock}
The least core value $\optexcess$ and a non-empty set of essential coalitions $\mathcal S$

\end{outputblock}

\State{Select $s\in N$ and initialize $\excess'\algset2M$}

\Repeat
\State{$\excess \algset\excess'$}
\State{Define $v_\excess(S)\coloneqq v(S)+\varepsilon$ for all $S\subsetneq N$, $S \neq \emptyset$, $v_\excess(\emptyset)=0$, and $v_\excess(N)=v(N)$}
\State{Define $u_{\excess}\left(S\right)\coloneqq\excessgame\left(N\setminus S\right)-\excessgame\left(N\right)$ for all $S\subseteq N$}
\State{$\overline{S}\algset\vphantom{\bigg|}\argmin_{S\subseteq N\setminus\left\{ s\right\} }\left(-\supaddcover{\excessgame}\left(S\right)-\supaddcover{u_{\excess}}\left(S\right)\right)$}
\State{Let $\mathcal{P}$ prop.\ partition of $\overline{S}$ s.t.
$
\supaddcover{\excessgame}(\overline{S})  =\sum_{P\in\mathcal{P}}\excessgame(P)=\sum_{P\in\mathcal{P}}v(P)+|\mathcal{P}|\excess$\label{line:least-core-partitions1}}
\State{Let $\mathcal{Q}$ prop.\ partition of $\overline{S}$ s.t.
$\supaddcover{u_{\excess}}(\overline{S})  =\sum_{Q\in\mathcal{Q}}u_{\excess}(Q)=\sum_{Q\in\mathcal{Q}}v(N\setminus Q)+|\mathcal{Q}|\excess-|\mathcal{Q}|v(N)$
\label{line:least-core-partitions2}}
\State{$\excess'\algset\frac{1}{\left|\mathcal{P}\right|+\left|\mathcal{Q}\right|}\Bigl(\left|\mathcal{Q}\right|v\left(N\right)-\sum_{P\in\mathcal{P}}v\left(P\right)-\sum_{Q\in\mathcal{Q}}v\left(N\setminus Q\right)\Bigr)$\label{line:least-core-update}}
\Until{$-\supaddcover{\excessgame}(\overline{S})-\supaddcover{u_{\excess}}(\overline{S})\geq0$}

\returnstmt{$\excess$, $\mathcal S\coloneqq\mathcal P\cup \{N\setminus Q \mid Q\in \mathcal Q\}$}

\end{algorithmic}

\caption{\label{alg:least-core}Least core}
\end{algorithm}

Note that one needs to identify first a value $\excess$ that is unfeasible. 
We can compute such a value from a bound $M$ on $\max_{T\subseteq N}\left|v\left(T\right)\right|$. Such a bound can be computed by taking the maximum bound on $A=\max_{T\subseteq N} v\left(T\right)$ and $B=\max_{T\subseteq N} -v\left(T\right)$: $A$  can be computed in polytime through submodular function minimization, while a bound on B can be obtained from an approximation algorithm for submodular function maximization.  
Indeed consider the non-negative submodular function $w\coloneqq-v + A$.   The greedy algorithm is a 1/3-approximation algorithm for the problem $\max_{T\subseteq N} w(T)$ \cite{buchbinder2015tight}. The value  $C$ returned by the greedy algorithm on $w$ thus satisfies  $C\geq \frac{1}{3} \max_{T\subseteq N} (-v(T)+A)$.
Hence $3C - A$ is an upper bound on $B$.
$M=\max\{A,3C-A\}+1$ is thus a bound on $\max_{T\subseteq N} \abs{v(T)}$ that can be computed in polytime. 

\begin{proposition}
\label{prop:least-core-computation2}Algorithm~\ref{alg:least-core} terminates
in at most $2n-2$ iterations with the least core value
$\optexcess$ of the convex
game $\left(N,v\right)$ and a (non-empty) set of essential coalitions $\mathcal S$.
\end{proposition}

\begin{proof}
We denote by $\excess^{i}$ the value of $\excess$ at iteration $i$ of the repeat-until loop,  by $\overline{S}_i$ the minimizer $\overline{S}$, and by $\mathcal{P}^{i}$, $\mathcal{Q}^{i}$ the associated partitions determined in 
lines~\ref{line:least-core-partitions1}-\ref{line:least-core-partitions2}, which can be computed through Algorithm~\ref{alg:superadditive-cover}. Notice that when the termination condition is not met, lines~\ref{line:least-core-partitions1}-\ref{line:least-core-partitions2} are well-defined as $\bar{S}_i\neq \emptyset$. We discuss the case where the termination criterion is met later. 

We have $\excess^{1}\coloneqq2M$ and, at each iteration $i$ where the terminating condition is not met (a non-terminating iteration $i$ hereafter):
\begin{equation*}
    \begin{alignedat}{1}
\supaddcover{\excessgame[\excess^{i}]}(\overline{S}_i)=\sum_{P\in\mathcal{P}^{i}}\excessgame[\excess^{i}]\left(P\right),\qquad\supaddcover{u_{\excess^{i}}}(\overline{S}_i)=\sum_{Q\in\mathcal{Q}^{i}}u_{\excess^{i}}\left(Q\right),  \\
\excess^{i+1}\coloneqq\frac{1}{\left|\mathcal{P}^{i}\right|+\left|\mathcal{Q}^{i}\right|}\Big(\left|\mathcal{Q}^{i}\right|v\left(N\right)-\sum_{P\in\mathcal{P}^{i}}v\left(P\right)-\sum_{Q\in\mathcal{Q}^{i}}v\left(N\setminus Q\right)\Big).
    \end{alignedat}
\end{equation*}
Remember how, unraveling the definitions of $\supaddcover{v_{\excess_i}}$ and $\supaddcover{u_{\excess_i}}$, we can rewrite similarly to (\ref{eq:continuing-condition}) the LHS of the terminating condition as
\begin{equation}
-\supaddcover{v_{\excess_i}}(\overline{S}_i) - \supaddcover{u_{\excess_i}}(\overline{S}_i)=\left|\mathcal{Q}^{i}\right|v(N)-\sum_{P\in\mathcal{P}^{i}}v(P)-\sum_{Q\in\mathcal{Q}^{i}}v(N\setminus Q)-\left(\left|\mathcal{P}^i\right|+\left|\mathcal{Q}^i\right|\right)\excess^{i}. \label{eq:explicit-continuing-condition}
\end{equation}

We prove first that the initial $\excess = \excess^1$
never satisfies the terminating condition by proving that the inequality
(\ref{eq:continuing-condition}) holds for any proper partitions $\mathcal{P},\mathcal{Q}$ of any subset $S\subseteq N$, $S \neq \emptyset$, as  
indeed
it holds
\begin{equation*}
    \begin{alignedat}{1}
  &-|\mathcal{Q}|v(N)+\sum_{P\in\mathcal{P}}v(P)+\sum_{Q\in\mathcal{Q}}v(N\setminus Q)+(|\mathcal{P}|+|\mathcal{Q}|)\excess^{1} \\
   >&-|\mathcal{Q}| M-|\mathcal{P}| M-|\mathcal{Q}| M+(|\mathcal{P}|+|\mathcal{Q}|)2M \\
  \ge& |\mathcal{P}| M>0.   
    \end{alignedat}
\end{equation*}
Now we prove that, for a non-terminating iteration $i$, we have  $\excess^{i+1}<\varepsilon^i$. It implies in particular: 
\begin{equation}
   \excess^{j+1} <\varepsilon^{i+1}  \mbox{ for all non-terminating iterations  $i,j$ with } i< j. \label{eq:count}
\end{equation}
Assume that at iteration $i$ the procedure does not
terminate, then (\ref{eq:continuing-condition}) holds with $\excess=\excess^{i}$,
$\mathcal{P}=\mathcal{P}^{i}$ and $\mathcal{Q}=\mathcal{Q}^{i}$. Equation
(\ref{eq:continuing-condition}) can be rewritten in the form
\[
\frac{1}{\left|\mathcal{P}^{i}\right|+\left|\mathcal{Q}^{i}\right|}\Big(\left|\mathcal{Q}^{i}\right|v\left(N\right)-\sum_{P\in\mathcal{P}^{i}}v\left(P\right)-\sum_{Q\in\mathcal{Q}^{i}}v\left(N\setminus Q\right)\Big)<\excess^{i},
\]
but since the LHS is the definition of $\excess^{i+1}$ (as $i$ is a non-terminating iteration), we obtain
$\excess^{i+1}<\excess^{i}$. 

From the arguments of Section~\ref{sec:farkas} we already know that at any iteration $i$ the value of $\excess^{i}$ is an upper bound on the least core value $\optexcess$, and that if the algorithm terminates it thus does so with the correct value.

We now deal with the termination of the algorithm. Let us prove that for any non-terminating iterations $j\neq i$, it must be that $\left|\mathcal{P}^{i}\right|+\left|\mathcal{Q}^{i}\right|\neq\left|\mathcal{P}^{j}\right|+\left|\mathcal{Q}^{j}\right|$.
Suppose by contradiction  that instead $\left|\mathcal{P}^{i}\right|+\left|\mathcal{Q}^{i}\right|=\left|\mathcal{P}^{j}\right|+\left|\mathcal{Q}^{j}\right|$. 
We claim  that
\begin{equation}
\left|\mathcal{Q}^{i}\right|v\left(N\right)-\sum_{P\in\mathcal{P}^{i}}v\left(P\right)-\sum_{Q\in\mathcal{Q}^{i}}v\left(N\setminus Q\right)=\left|\mathcal{Q}^{j}\right|v\left(N\right)-\sum_{P\in\mathcal{P}^{j}}v\left(P\right)-\sum_{Q\in\mathcal{Q}^{j}}v\left(N\setminus Q\right).\label{eq:equality-partition-expression}
\end{equation}
Indeed assume by contradiction that \textsc{w.l.o.g.}~we have 
\[
\left|\mathcal{Q}^{i}\right|v(N)-\sum_{P\in\mathcal{P}^{i}}v(P)-\sum_{Q\in\mathcal{Q}^{i}}v(N\setminus Q)>\left|\mathcal{Q}^{j}\right|v(N)-\sum_{P\in\mathcal{P}^{j}}v(P)-\sum_{Q\in\mathcal{Q}^{j}}v(N\setminus Q).
\]
It follows that by (\ref{eq:explicit-continuing-condition})
\begin{equation*}
    \begin{alignedat}{1}
    -\supaddcover{v_{\excess_i}}(\overline{S}_i) - \supaddcover{u_{\excess_i}}(\overline{S}_i) 
&=\left|\mathcal{Q}^{i}\right|v(N)-\sum_{P\in\mathcal{P}^{i}}v(P)-\sum_{Q\in\mathcal{Q}^{i}}v(N\setminus Q)-\left(\left|\mathcal{P}^i\right|+\left|\mathcal{Q}^i\right|\right)\excess^{i}  \\
&>\left|\mathcal{Q}^{j}\right|v(N)-\sum_{P\in\mathcal{P}^{j}}v(P)-\sum_{Q\in\mathcal{Q}^{j}}v(N\setminus Q)-\left(\left|\mathcal{P}^j\right|+\left|\mathcal{Q}^j\right|\right)\excess^{i}\\
&=\sum_{P \in \mathcal{P}^j} - v_{\excess_i}(P) + \sum_{Q \in \mathcal{Q}^j} - u_{\excess_i}(Q),
    \end{alignedat}
\end{equation*}
but then 
\begin{equation*}
    \begin{alignedat}{1}
    -\supaddcover{v_{\excess_i}}(\overline{S}_j) - \supaddcover{u_{\excess_i}}(\overline{S}_j) 
    &= \min_{\mathcal{P} \text{ prop.\ part.\ of } \overline{S}_j} \Bigl( \sum_{P \in \mathcal{P}} - v_{\excess_i}(P) \Bigr)
    +\min_{\mathcal{Q} \text{ prop.\ part.\ of } \overline{S}_j} \Bigl( \sum_{Q \in \mathcal{Q}} - u_{\excess_i}(Q) \Bigr)\\
    &\leq \sum_{P \in \mathcal{P}^j} - v_{\excess_i}(P) + \sum_{Q \in \mathcal{Q}^j} - u_{\excess_i}(Q) \\
    &< -\supaddcover{v_{\excess_i}}(\overline{S}_i) - \supaddcover{u_{\excess_i}}(\overline{S}_i),
    \end{alignedat}
\end{equation*}
contradicting that  $\overline{S}_i$ is minimizing $\left(-\supaddcover{v_{\excess_i}}\left(S\right)-\supaddcover{u_{\excess_i}}\left(S\right)\right)$ over all $S\subseteq N\setminus\left\{ s\right\}$. 

Now the validity of (\ref{eq:equality-partition-expression}) in turn implies that 
\begin{equation*}
    \begin{alignedat}{1}
   \excess^{i+1} & =\frac{\left|\mathcal{Q}^{i}\right|v(N)-\sum_{P\in\mathcal{P}^{i}}v(P)-\sum_{Q\in\mathcal{Q}^{i}}v(N\setminus Q)}{\left|\mathcal{P}^{i}\right|+\left|\mathcal{Q}^{i}\right|}\\
& = \frac{\left|\mathcal{Q}^{j}\right|v(N)-\sum_{P\in\mathcal{P}^{j}}v(P)-\sum_{Q\in\mathcal{Q}^{j}}v(N\setminus Q)}{\left|\mathcal{P}^{j}\right|+\left|\mathcal{Q}^{j}\right|}=\excess^{j+1}, 
    \end{alignedat}
\end{equation*}
contradicting (\ref{eq:count}). Since $\left|\mathcal{P}^{i}\right|,\left|\mathcal{Q}^{i}\right|\in\left\{ 1,\ldots,n-1\right\}$ we have $2\leq\left|\mathcal{P}^{i}\right|+\left|\mathcal{Q}^{i}\right|\leq2n-2$. The number of non-terminating iterations is thus at most $2n-3$.

Notice that, when the procedure
terminates (say at iteration $k\geq 2$), the terminating condition is met with
equality by $\bar{S}_{k-1}\neq \emptyset$, since it must be that $\min_{S\subseteq N\setminus\left\{ s\right\} }\left(-\supaddcover{\excessgame[\excess^{k}]}\left(S\right)-\supaddcover{u_{\excess^{k}}}\left(S\right)\right)\geq0$
and by the definition of $\excess^{k}$
\begin{equation*}
    \begin{alignedat}{1}
    -\supaddcover{\excessgame[\excess^{k}]}(\bar{S}_{k-1})-\supaddcover{u_{\excess^{k}}}(\bar{S}_{k-1})
    &=\left|\mathcal{Q}^{k-1}\right|v(N)-\sum_{P\in\mathcal{P}^{k\mathrlap{-1}}}v(P)-\sum_{Q\in\mathcal{Q}^{k\mathrlap{-1}}}v(N\setminus Q)-\left(\left|\mathcal{P}^{k-1}\right|+\left|\mathcal{Q}^{k-1}\right|\right)\excess^{k}\\
&=0.
    \end{alignedat}
\end{equation*}

We can thus assume \textsc{w.l.o.g.}~that $\bar{S}_{k}=\bar{S}_{k-1}\neq \emptyset$ (and lines~\ref{line:least-core-partitions1}-\ref{line:least-core-partitions2} are again well-defined). In the terminating iteration $k$, we thus have
\[
\frac{1}{\left|\mathcal{P}^k\right|+\left|\mathcal{Q}^k\right|}\Big(\left|\mathcal{Q}^k\right|v\left(N\right)-\sum_{P\in\mathcal{P}^k}v\left(P\right)-\sum_{Q\in\mathcal{Q}^k}v\left(N\setminus Q\right)\Big)=\optexcess,
\]
The solution $(\mu_S)_{S\subseteq N, S \neq \emptyset}$ defined as
\[
\mu_{S}=\begin{cases}
-\frac{1}{|{\mathcal P}^k| + |{\mathcal Q}^k|}\qquad & \text{if \ensuremath{S\in\mathcal{P}^k} or \ensuremath{N\setminus S\in\mathcal{Q}^k}},\\
\frac{|{\mathcal Q}^k|}{|{\mathcal P}^k| + |{\mathcal Q}^k|} & \text{if \ensuremath{S=N}}\\
0 & \text{otherwise},
\end{cases}
\]
is thus an optimal solution to (\ref{eq:least-core-dual}), as it is a feasible solution (obviously $\mu_S\leq 0$ for $S\neq N$ and $\sum_{S\neq N} \mu_S=-1$, while we have $\sum_{S\neq N:s\in S} \mu_S + \mu_N=0$ as the elements of $N$ are covered exactly $|{\mathcal Q}|$ times by the elements of the family $\mathcal S\coloneqq\mathcal P\cup \{N\setminus Q \mid Q\in \mathcal Q\}$) with optimal value $\optexcess$. Hence, the elements in $\mathcal S$, having negative dual multipliers, are essential coalitions by complementary slackness.
\end{proof}

\section{Complexity}\label{sec:complexity}

We now briefly discuss the computational complexity of the various algorithms.  Let
$\oracletime[f]$ be the time-complexity of an oracle evaluating a
set function $f\colon\powerset N\rightarrow\mathbb{R}$ over any subset
of $N$ with $n\coloneqq\left|N\right|$.
When $f$
is submodular, we assume that computing a minimizer $S\subseteq N$ of $f$ can be done in time
$\bigo(p_n\cdot\oracletime[f])$, where $p_n$ is a function of $n$ (asymptotically) bounded by a polynomial in $n$. The survey \cite{leeFasterCuttingPlane2015} states that the best-known combinatorial 
algorithm to find a minimizer of $f$ is by Orlin \cite{orlin2009faster} and has $p_n=O(n^{5})$ (if $\oracletime[f]=\Omega(n)$). 

Our new Algorithm~\ref{alg:least-core} for computing the least core, which leverages Frank's discrete sandwich theorem~\cite{frankAlgorithmSubmodularFunctions1982a}, runs in time 
$O(n^2 \cdot (p_n)^2 \cdot \oracletime[v])$. 
In contrast, the alternative algorithm for computing the least core in \cite{kuipersPolynomialTimeAlgorithm1996}, building on the work of Fujishige~\cite{fujishige1984structures}, has a complexity of 
$O(n^5 \cdot (p_n)^2 \cdot \oracletime[v])$. 
Overall, this results in a factor $n^3$ improvement over \cite{kuipersPolynomialTimeAlgorithm1996}. More precisely, while the gain is limited to a factor of $n$ for non-emptiness testing, the certificate of emptiness provided by Theorem~\ref{thm:empty} yields an additional factor $n^{2}$ reduction in the number of iterations required to determine~$\varepsilon^*$. 
As a consequence, Algorithm~\ref{alg:pseudokuipers} for computing the nucleolus can be executed in time 
$O(n^4 \cdot (p_n)^3 \cdot \oracletime[v])$ when using Algorithm~\ref{alg:least-core}. Moreover, by avoiding a separate execution for each $i \in N$ and instead applying a divide-and-conquer strategy, we can save a further factor $n$, yielding an overall complexity of 
$O(n^3 \cdot (p_n)^3 \cdot \oracletime[v])$ for computing the nucleolus.

 A preliminary analysis suggests that the only know algorithm for commputing the nucleolus of convex games~\cite{faigle2001computation} runs in time
$ O(n^{14} \cdot (n + \langle v \rangle)^2 \cdot p_n \cdot \oracletime[v])$
where $\langle v \rangle$ denotes an upper bound on the encoding length of the values $v(S)$ for all $S \subseteq N$ (disregarding the elementary operations). This estimate relies on the complexity of optimization via separation, as described in Theorem~4.21 of~\cite{korte2011combinatorial}, and on the fact that a polyhedron in $\mathbb{R}^n$ with facet complexity $\varphi$ has vertex complexity at most $4n^2 \varphi$ (Lemma~6.2.4 in~\cite{grotschel2012geometric}). In the algorithm of \cite{faigle2001computation}, there are $O(n^2)$ iterations, each involving the resolution of a linear program with facet complexity $O(n + \langle v \rangle)$, solved through a separation oracle involving $O(n^2)$ calls to submodular function minimization. 
These preliminary investigations thus suggest that the oracle complexity of our algorithm improves upon~\cite{faigle2001computation}. Nonetheless, given that both complexity analyses are likely not tight, it remains difficult to draw a definitive conclusion.

\section*{Acknowledgments}

We would like to express our sincere gratitude to Michel Grabisch for providing us with a scanned version of Kuipers' original manuscript \cite{kuipersPolynomialTimeAlgorithm1996}, and to Ulrich Faigle for his invaluable assistance in helping us access this important resource. Their support was crucial to the progress of this work.

\newpage{}
\appendix

\section{Counterexample to Incorrect Theorem~\ref{prop:kuiperswrong}\newline}\label{sec:appendix_counterexample}
Let us first remind the statement of the Incorrect Theorem~\ref{prop:kuiperswrong}.

\renewcommand{\theassertion}{1}
\begin{assertion}\cite{kuipersPolynomialTimeAlgorithm1996}
	Let $(N, v)$ be a convex game, let $U \subsetneq N$ be such that \mbox{$y(U)=z(U)$} $\forall y, z \in \leastcore(v)$, and let $x \in \core(v)$.
	Then for all $S \subsetneq U, S \neq  \emptyset$ we have
	\begin{equation*}
		\max_{Q\subseteq N\setminus U} \bigl(v(S\cup Q) - x(Q)\bigr) = \max\{v(S), \; v(S \cup N\setminus U) - x(N\setminus U)\}.
	\end{equation*}
\end{assertion}
\addtocounter{assertion}{-1}

Consider the game $(N,v)$ with $N=\{1,2,3,4\}$ and $v$ defined as:
\begin{equation*}
	v(S)=\begin{cases}
		3 \qquad &\text{if }S=\{1,2\},\{2,3\},\{2,3,4\},\\
		6 \qquad &\text{if }S=\{1,2,3\},\{1,2,4\},\\
		10 \qquad &\text{if }S=N,\\
		0 \qquad &\text{otherwise}.
	\end{cases}
\end{equation*}
Numerical computations show that $(N,v)$ is convex, and that its nucleolus is $\eta(N,v)=(\frac{5}{2}, \frac{7}{2},2,2)$, the least core value is $\varepsilon^*=2$, and both $\{3\}$ and $\{4\}$ are essential coalitions. 
For completeness, we provide a proof of these facts in Claim~\ref{cl:1} below.

Let us now prove that $(N,v)$ yields a counterexample to the Incorrect Theorem~\ref{prop:kuiperswrong}. 
Let  $U=\{1,2\}$ and $x= (\frac{5}{2}, \frac{7}{2},2,2)$.  
Since $(N,v)$ is convex and $x$ is its nucleolus, $x\in\core(v)$. 
Besides, $\{3\}$ and $\{4\}$ are essential coalitions and $\varepsilon^*=2$, therefore $y_3=y_4=2$ for all $y\in\leastcore(v)$. 
In particular, we have $y(U)=z(U)=10-4=6$ for all $y,z\in \leastcore(v)$. 
Now, for $S=\{2\}$, we have:
\begin{equation*}
    \begin{alignedat}{1}
v(\{2\}\cup \emptyset)-\eta(\emptyset)&=0-0=0,\\
v(\{2\}\cup \{3\})-\eta(\{3\})&=3-2=1,\\
v(\{2\}\cup \{4\})-\eta(\{4\})&=0-2=-2,\\
v(\{2\}\cup \{3,4\})-\eta(\{3,4\})&=3-4=-1.
    \end{alignedat}
\end{equation*}

It follows that $\max_{Q \subseteq N\setminus U} v(S\cup Q)-x(Q)=1\neq \max\{v(S),v(S\cup N\setminus U)-x(N\setminus U)\}=0$,  contradicting the conclusion of Incorrect Theorem~\ref{prop:kuiperswrong}.

\begin{claim}\label{cl:1}
Let $(N,v)$ be a game defined as above. Then the following hold:
\begin{enumerate}
    \item\label{it:kuipers-monotone} $v$ is monotone non-decreasing, i.e.\ $v(S)\leq v(T)$ for $S\subseteq T\subseteq N$. 
    \item\label{it:kuipers-convex} $(N,v)$ is convex.
    \item\label{it:kuipers-lcvalue} The least core value of $(N,v)$ is $\varepsilon^*=2$
    \item\label{it:kuipers-essentialcoalitions} $\{3\}$ and $\{4\}$ are essential coalitions.
    \item\label{it:kuipers-nucleolus} The nucleolus of $(N,v)$ is $\eta= \eta(N,v)=(\frac{5}{2}, \frac{7}{2},2,2)$
\end{enumerate}
\end{claim}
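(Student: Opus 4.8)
The plan is to establish the five properties in turn, relying on organized but elementary computation over the $16$ subsets of $N$. For \ref{it:kuipers-monotone} and \ref{it:kuipers-convex} I would work player by player rather than over all pairs. Monotonicity reduces to checking that adjoining a single element never lowers the value, i.e.\ $v(S)\le v(S\cup\{i\})$ for all $S$ and $i\notin S$; since $v\ge 0$ and every positive value sits on a superset of $\{1,2\}$ or $\{2,3\}$, this follows by inspecting the covering relations. For convexity I would verify the marginal form of \eqref{eq:supermodularity}: fixing each player $i$, I tabulate $v(S\cup\{i\})-v(S)$ over all $S\subseteq N\setminus\{i\}$ and confirm it is nondecreasing along every chain. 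This amounts to four short tables.

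For \ref{it:kuipers-lcvalue} I would prove $\varepsilon^*=2$ by matching bounds. The lower bound $\varepsilon^*\ge 2$ follows by exhibiting the feasible pair $(\nu,2)$ with $\nu=(\tfrac52,\tfrac72,2,2)$: computing every excess $\nu(S)-v(S)$ shows each is at least $2$, with equality exactly on $\{3\},\{4\},\{1,2,3\},\{1,2,4\}$. For the upper bound $\varepsilon^*\le 2$ I would use the balanced family $\mathcal B=\{\{1,2,3\},\{1,2,4\},\{3\},\{4\}\}$ with all weights $\tfrac12$; these cover each player exactly once, so summing the associated constraints of \eqref{eq:least-core} gives $v(N)=x(N)\ge\tfrac12\sum_{S\in\mathcal B}\bigl(v(S)+\varepsilon\bigr)$, which rearranges to $\varepsilon\le 2$. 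This is precisely the kind of Farkas certificate that Algorithm~\ref{alg:least-core} would return.

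For \ref{it:kuipers-essentialcoalitions} and \ref{it:kuipers-nucleolus} I would reuse $\mathcal B$ as an optimal dual solution: setting $\mu_S=-\tfrac14$ for $S\in\mathcal B$ and $\mu_N=\tfrac12$ yields a feasible point of \eqref{eq:least-core-dual} with objective $2=\varepsilon^*$, hence optimal. Since $\mu_{\{3\}},\mu_{\{4\}}<0$, complementary slackness forces $x_3=x_4=2$ on the whole least core, proving \ref{it:kuipers-essentialcoalitions}. Fixing $x_3=x_4=2$ and $x(N)=10$, the surviving binding constraints ($x_1\ge 2$ from $\{1\}$ and $x_2\ge 3$ from $\{2,3\}$) cut the least core down to the segment $\{(x_1,6-x_1,2,2):x_1\in[2,3]\}$. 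To isolate the nucleolus I then pass to the next lexicographic level: on this segment the smallest non-fixed excess is $\min(x_1,5-x_1)$, which is uniquely maximized at $x_1=\tfrac52$, pinning down $\nu=(\tfrac52,\tfrac72,2,2)$ and giving \ref{it:kuipers-nucleolus}.

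The main obstacle is \ref{it:kuipers-nucleolus}: one must correctly determine which coalitions remain tight across the entire least core (those of excess $2$) and which supply the next active constraints, and then argue that a single refinement step already forces a unique point, so that no further lexicographic levels need to be analyzed. The convexity verification in \ref{it:kuipers-convex}, though conceptually routine, is the most computation-heavy step and the easiest place to make a slip.
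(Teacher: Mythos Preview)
Your proposal is correct. Parts \ref{it:kuipers-monotone}--\ref{it:kuipers-essentialcoalitions} follow essentially the same route as the paper, with one cosmetic difference: for the upper bound $\varepsilon^*\le 2$ and the essentiality of $\{3\},\{4\}$ you use a single balanced family $\mathcal B=\{\{1,2,3\},\{1,2,4\},\{3\},\{4\}\}$ with uniform weights $-\tfrac14$, whereas the paper uses two separate two-element certificates ($\{\{1,2,3\},\{4\}\}$ and $\{\{1,2,4\},\{3\}\}$ with weights $-\tfrac12$). Both yield the same conclusion.

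The genuine divergence is in \ref{it:kuipers-nucleolus}. The paper exploits its own machinery: having pinned $\nu_3=\nu_4=2$, it forms the reduced game $(\{1,2\},v_{\{1,2\},\nu})$ via \eqref{eq:reducedxcore}, computes $v_{\{1,2\},\nu}(\{1\})=0$ and $v_{\{1,2\},\nu}(\{2\})=1$, and then reads off $\nu_1,\nu_2$ from the explicit two-player formula (\propref{nucleolus2}). Your argument is instead a direct second-level lexicographic step: you describe the least core as the segment $\{(x_1,6-x_1,2,2):x_1\in[2,3]\}$, identify the next binding excesses as $x_1$ (from $\{1\}$) and $5-x_1$ (from $\{2,3\}$), and maximize $\min(x_1,5-x_1)$ to get $x_1=\tfrac52$. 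Your approach is more elementary and self-contained; the paper's approach illustrates precisely the reduced-game consistency (\propref{reducednucleolus}, \propref{reducedmax2}) that underlies \algref{pseudokuipers}, which is thematically apt given that the claim serves as a counterexample to Kuipers' reduced-game computation. One small point worth making explicit in your write-up: the second-level optimum is a singleton because the two active constraints $x_1\ge\tfrac52$ and $5-x_1\ge\tfrac52$ pin $x_1$ exactly, so no further levels are needed.
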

\begin{proof}
\ref{it:kuipers-monotone} We only need to check monotonicity for $S \subsetneq T \subseteq N$ such that $|T|=|S|+1$ and $|S|\leq 3$. 
The property is trivially satisfied for $S$ with $|S|\in \{0,1\}$,  $|S|=2$ and $v(S)=0$, or $|S|=3$. 
Now consider a set $S$ such that $\abs{S}=2$ and $v(S)\neq 0$, that is $\{1,2\}$ or $\{2,3\}$. 
Then $v(S)=3$, and any set $T$ with $\abs{T}=3$ containing $S$ has value $3$ or $6$. 

\ref{it:kuipers-convex} It is enough to prove that $v(S\cup\{i\})-v(S)\leq v(T\cup \{i\})-v(T)$ for all $S\subsetneq T \subsetneq N$ with $|T|=|S|+1$, and $i\not\in T$. 
In particular, since $|N|=4$ we have $|S|\leq2$. 
Let therefore $i\in N$ and $S\subseteq N\setminus\{i\}$ with $|S|\leq2$. 
Observe that, for any $U\subseteq N\setminus\{i\}$, $v(U\cup\{i\})-v(U)$ takes values in $\{0,3,4,6,7,10\}$.
However, since $|S|\leq2$, $v(S\cup\{i\})-v(S)\in\{0,3,6\}$.
We now enumerate over all the possible values for $v(S\cup\{i\})-v(S)$. 
\begin{itemize}
\item If $v(S\cup\{i\})-v(S) = 0$, then  the thesis $v(T\cup\{i\})-v(T) \geq 0$ follows from monotonicity, as $v(T\cup\{i\})\geq v(T)$. This case includes $S=\emptyset$, so from here onward we will assume $|S|\geq1$.

\item If $v(S\cup\{i\})-v(S)=6$, then $|S\cup\{i\}|=3$ and $i\in\{1,2\}$. It follows that $T\cup \{i\}=N$ and $T\in\{\{2,3,4\},\{1,3,4\}\}$. But then $v(T\cup\{i\})-v(T)\geq 7\geq6$.

\item If $v(S\cup\{i\})-v(S)=3$, assume by contradiction that $v(T\cup \{i\})-v(T)=0$, as all other possible values would imply that $(N,v)$ is convex.
Then $v(T\cup \{i\})=v(T)$ and it follows that either $v(T)=v(T\cup\{i\})=0$, or $T=\{2,3\}$ and $T\cup\{i\}=\{2,3,4\}$, since $|T|>|S|\geq1 \implies |T|\geq 2$. In the former case, we have $v(S)=v(T)=0$ and $v(S\cup\{i\})=v(T\cup\{i\})=0$ by monotonicity, and thus $v(S\cup\{i\})-v(S)=v(T\cup\{i\})-v(T)=0$, a contradiction. In the latter case, we have $S\in \{\{2\},\{3\}\}$ and $i=4$, but it implies $v(S\cup\{i\})=v(S)=0$ and thus $v(S\cup\{i\})-v(S)=0$, again a contradiction. 
\end{itemize}

\ref{it:kuipers-lcvalue}-\ref{it:kuipers-essentialcoalitions} 
For any feasible solution $(x,\excess)$ to (\ref{eq:least-core}), we have $x(S)\geq v(S)+\excess$ for any $S\subsetneq N, S\neq \emptyset$ and $x(N)=v(N)$. 
In particular, we have $x(\{1,2,3\}) \geq v(\{1,2,3\}) + \excess = 6 + \excess$,  $x(\{4\}) \geq v(\{4\}) + \excess=\excess$, and $x(\{1,2,3,4\})=v(\{1,2,3,4\})=10$. 
This implies that $10=x(N)=x(\{1,2,3\})+x(\{4\})\geq 6+2 \excess$ and therefore $\excess^* \leq 2$. Now, the point $(x=(\frac{5}{2}, \frac{7}{2},2,2),\excess=2)$ is feasible for (\ref{eq:least-core}), proving that $\optexcess=2$.
Furthermore, we have that $\{4\}$ is an essential coalition: the solution 
$\mu_{\{4\}}=-\frac{1}{2}$, $\mu_{\{1,2,3\}}=-\frac{1}{2}$, $\mu_N=\frac{1}{2}$, and $\mu_S=0$ otherwise is indeed an optimal solution to (\ref{eq:least-core-dual}), as it is feasible and has value $\optexcess$. 
Similar reasoning with $\{1,2,4\}$ and $\{3\}$  shows that also $\{3\}$ is an essential coalition.

\ref{it:kuipers-nucleolus} 
Since both $\{3\}$ and $\{4\}$ are essential coalitions, any solution $x\in \leastcore(v)$ has $x_3=x_4=2$.
This holds in particular for the nucleolus $\eta$, hence $\eta_3=\eta_4=2$.  We now consider the reduced game $(\{1,2\},v_{\{1,2\},\eta})$. 
By definition of reduced game (see (\ref{eq:reducedxcore})), we have: $v_{\{1,2\},\eta}(\emptyset)=0$, $v_{\{1,2\},\eta}(\{1,2\})=6$
$v_{\{1,2\},\eta}(\{1\})=\max \{v(\{1\}), v(\{1,3\})-\eta(\{3\}), v(\{1,4\})-\eta(\{4\}),v(\{1,3,4\})-\eta(\{3,4\})\}=0$,  
and $v_{\{1,2\},\eta}(\{2\})=\max \{v(\{2\}), v(\{2,3\})-\eta(\{3\}), v(\{2,4\})-\eta(\{4\}),v(\{2,3,4\})-\eta(\{3,4\})\}=1$.

The nucleolus of the reduced game $(\{1,2\},v_{\{1,2\},\eta})$ is the restriction of the nucleolus $\eta$ to $\{1,2\}$ by Proposition~\ref{prop:reducedconvex2}. 
For a two-player game, we have an explicit expression for the prenucleolus (which is the same as the nucleolus for convex games), see Proposition~\ref{prop:nucleolus2}. It follows that: 
\begin{equation*}
    \begin{alignedat}{1}
\eta_1&=v_{\{1,2\},\eta}(\{1\}) + \frac{v_{\{1,2\},\eta}(\{1,2\}) - v_{\{1,2\},\eta}(\{1\}) -v_{\{1,2\},\eta}(\{2\})}{2}=\frac{5}{2},\\
\eta_2&=v_{\{1,2\},\eta}(\{2\}) +\frac{v_{\{1,2\},\eta}(\{1,2\}) - v_{\{1,2\},\eta}(\{1\}) -v_{\{1,2\},\eta}(\{2\})}{2}=\frac{7}{2}.
    \end{alignedat}
\end{equation*}
\end{proof}

\begin{proposition}\label{prop:nucleolus2}
	Let $(N,v)$ be a game, with $\abs{N}=2$.
	The prenucleolus $\eta$ of  $(N,v)$ is the only point in $\leastcore(N,v)$ and is given by:
	\begin{equation*}
		\eta_j=v(\{j\})+\frac{1}{2}\Bigl(v(N)-\sum_{i \in N}v(\{i\})\Bigr) \qquad \forall j \in N.
	\end{equation*}
\end{proposition}
\begin{proof}
	We assume~$N=\{1,2\}$. The least core value $\optexcess$ is given by the following linear program:   
\begin{equation*}
    \begin{alignedat}{1}
 \optexcess = \; \text{maximize} \quad &\phantom{x_1+x_2}\mathllap{\excess}\\ 
  \text{subject to} \quad &\phantom{x_1+x_2}\mathllap{x_1} \ge v(\{1\})+\varepsilon\nonumber\\
 & \phantom{x_1+x_2}\mathllap{x_2} \ge v(\{2\})+\varepsilon \\
   & x_1+x_2=v(N).
    \end{alignedat}
\end{equation*}

	For any feasible $(x,\excess)$ we have $x_1+x_2=v(N) \ge v(\{1\})+v(\{2\})+2\varepsilon$, which implies
    \[\optexcess \le \frac{1}{2}\Bigl(v(N)-\bigl(v(\{1\})+v(\{2\})\bigr)\Bigr)=\bar{\varepsilon}.\]
	But $(\bar{x},\bar{\excess})$ with $\bar{x}$ defined as follows  
	\[\bar{x}_j=v(\{j\})+\frac{1}{2}\Bigl(v(N)-\bigl(v(\{1\})+v(\{2\})\bigr)\Bigr) =v(\{j\})+\bar{\excess} \qquad  \text{for }j =1,2\]
    is a feasible solution, and
    thus $\optexcess=\bar{\excess}$. 
    Moreover, $\{1\}$ and $\{2\}$ are essential coalitions, as $\mu_{\{1,2\}}=\frac{1}{2}$, $\mu_{\{1\}}=\mu_{\{2\}}=-\frac{1}{2}$  is a dual optimal solution (it is dual feasible and has value $\optexcess$). 
    This proves that the least core is a singleton and coincide with the prenucleolus and $\bar{x}$.
\end{proof}

\newpage
\section{Proofs}\label{sec:appendix_proofs}
\begin{proposition}\label{prop:reducedxcore}
	Let $(N,v)$ be a game, and ${x} \in \core(v)$.
	Then $\forall S \subsetneq N$, $S \neq \emptyset$ the reduced game $(S, v_{S, {x}})$ is given by:
	\[v_{S, {x}}(T)=\max_{R \subseteq N\setminus S}(v(T \cup R)-{x}(R))\qquad \forall T\subseteq S.\]
\end{proposition}

\begin{proof}
	We only need to check the equation when $T=\emptyset$ and $T=S$, as for $T \neq \emptyset, S$ this holds by definition.
	If $T=\emptyset$, we have:
	\[
	\max_{R \subseteq N\setminus S}\left(v(T \cup R)-{x}(R)\right)=\max_{R \subseteq N\setminus S}(v(R)-{x}(R)) \le 0,
	\]
	where the final inequality derives from ${x}(R)\ge v(R)$ $\forall R \subseteq N$, as ${x} \in \core(v)$.
	But the maximum is $0$ as $v(\emptyset)-{x}(\emptyset)=0$, which is also $v_{S, x}(\emptyset)$.
	If $T=S$, we want to prove that $\forall R \subseteq N \setminus S$:
	\[
		v_{S, {x}}(S)=v(N)-{x}(N \setminus S) 
		\ge v(S \cup R)-{x}(R)
		=v(S \cup R)-{x}(S \cup R)+{x}(S).
	\]
	The thesis is equivalent to checking
	\[v(N)-{x}(N \setminus S) -{x}(S)\ge v(S \cup R)-{x}(S \cup R),\]
	but the LHS is equal to $0$ (since $x\in \core(v))$, so this reduces to
	\[{x}(S \cup R)\ge v(S \cup R),\]
	that is satisfied since ${x} \in \core(v)$.
\end{proof}

\begin{proposition}\label{prop:reducedconvex}
	Let $(N,v)$ be a  convex game.
	For all $S \subsetneq N$, $S\neq \emptyset$ and $x \in \core(v)$, the reduced game $(S, v_{S, x})$ is convex.
\end{proposition}
\begin{proof}
	For simplicity, let us denote $v'=v_{S, x}$.
	Using Proposition~\ref{prop:reducedxcore}, we consider for all $T \subseteq S$:
    \begin{equation*}
        \begin{alignedat}{1}
          Q^*_T\coloneqq&\text{arg}\max_{Q \subseteq N\setminus S}(v(T \cup Q)-x(Q))\\
		v'(T)=&v(T \cup Q^*_T)-x(Q^*_T).  
        \end{alignedat}
    \end{equation*}
	We show the supermodularity of $v'$ for $T,R \subseteq S$:
    \begin{equation*}
        \begin{alignedat}{1}
		&v'(T)+v'(R) = v(T \cup Q^*_T)-x(Q^*_T)+v(R \cup Q^*_R)-x(Q^*_R)\\
		&\qquad \le v((T \cup Q^*_T)\cup (R \cup Q^*_R))+v((T \cup Q^*_T)\cap (R \cup Q^*_R))
		-x(Q^*_T\cup Q^*_R)-x(Q^*_T\cap Q^*_R)\\
		&\qquad = v((T \cup R)\cup (Q^*_T\cup Q^*_R))-x(Q^*_T\cup Q^*_R)
		+v((T \cap R)\cup (Q^*_T\cap Q^*_R))-x(Q^*_T\cap Q^*_R)\\
		&\qquad \le v'(T \cup R)+v'(T \cap R).
        \end{alignedat}
    \end{equation*}
The first inequality comes from the supermodularity of $v$.
In the third line, equality holds as $T,R \subseteq S$ and $Q^*_T, Q^*_R \subseteq N \setminus S$.
The last inequality derives from the definition of $v'$.
\end{proof}

\begin{proposition}\label{prop:reducedmax}
	Let $(N,v)$ be a  convex game, $x \in \leastcore(v)$, and  $S$ an essential coalition. 
	\begin{enumerate}
		\item \label{i:reducedmax}The reduced game $(S, v_{S, x})$ is given by:
		\begin{equation*}
			v_{S, x}(T)=\max \{v(T),\; v(T \cup (N \setminus S))-x(N \setminus S)\} \qquad \forall T \subseteq S.
		\end{equation*}
		
		\item  \label{ii:reducedmax} The reduced game $(N \setminus S, v_{N \setminus S, x})$ is given by:
		\begin{equation*}
			v_{N \setminus S, x}(T)=\max \{v(T), \;v(T \cup S)-x(S)\} \qquad \forall T \subseteq N \setminus S.
		\end{equation*}
	\end{enumerate}
\end{proposition}

\begin{proof}
\ref{i:reducedmax} Since $S$ is an essential coalition, we have $S\neq\emptyset,N$.
	 Consider $T \subseteq S$ and define
	\[Q^*\coloneqq\mathop{\mathrm{arg}}\max_{Q \subsetneq N\setminus S}(v(T \cup Q)-x(Q)).\]
	From Proposition~\ref{prop:reducedxcore} we obtain
    \[
    v_{S, x}(T)=\max\{v(T \cup Q^*)-x(Q^*), \; v(T \cup (N\setminus S))-x(N\setminus S)\}.
    \]
    It thus suffices to prove that $v(T \cup Q^*)-x(Q^*)=v(T)$. Let $\optexcess$ be the least core value. 
    Remember that $x(S)=v(S)+\optexcess$ for an essential coalition $S$.
    \begin{equation*}
        \begin{alignedat}{1}
		v(T)-x(T)-\varepsilon^*
		&=v(T)-x(T)+v(S)-x(S)\\
		&\le v(T \cup Q^*)-x(Q^*)-x(T)+v(S)-x(S)\\
		&\le v(S \cup Q^*)-x(Q^*)-x(T)+v(T)-x(S)\\
		&= v(T)-x(T)+(v(S \cup Q^*)-x(S \cup Q^*))\\
		&\le v(T)-x(T)-\varepsilon^*.
        \end{alignedat}
    \end{equation*}
	The first equality comes from $S$ being an essential coalition.
	To pass to the second line we used the definition of $Q^*$, in particular $v(T \cup Q^*)-x(Q^*) \ge v(T \cup \emptyset)-x(\emptyset)=v(T)$.
	For the third line, we used the supermodularity of $v$ (with respect to the coalitions $T \cup Q^*$ and $S$), and $T\subseteq S$ and $Q^*\subsetneq N \setminus S$, implying 
    $v(T \cup Q^*)+v(S) \le v(T \cup Q^* \cup S) + v((T\cup Q^*)\cap S)=v(S \cup Q^*)+v(T)$. 
    The fourth line comes from $S\cap Q^*=\emptyset$.
	The last inequality derives from 
	$x \in \leastcore(v)$ and $S\cup Q^*\neq \emptyset,N$.
	Since the first and the last terms are equal, we have equalities all along, in particular:
        \begin{equation*}
        \begin{alignedat}{1}
		v(T)-x(T)+v(S)-x(S)&= v(T \cup Q^*)-x(Q^*)-x(T)+v(S)-x(S)\\
		\implies \quad v(T)&= v(T \cup Q^*)-x(Q^*).
        \end{alignedat}
        \end{equation*}

	\ref{ii:reducedmax} Since $x \in \leastcore(v)$, we have
	$\varepsilon^* \le x(Q)-v(Q)$ $\forall Q \subsetneq N$, $Q \neq \emptyset$.
	Let $T\subseteq N \setminus S$, then $\forall Q \subseteq S, Q \neq \emptyset$ we have:
	\[
		v(T \cup Q)-x(Q)+ \varepsilon^*
		\le v(T \cup Q)-v(Q)
		\le v(S \cup T)-v(S)=v(S \cup T)-x(S)+\varepsilon^*.
	\]
	Where the second inequality derives from the supermodularity of $v$ (with respect to the coalitions $T \cup Q$ and $S$) and $T \subseteq N \setminus S$, $Q\subseteq S$, implying $v(T \cup Q)+v(S) \le v(T \cup Q \cup S) + v((T \cup Q)\cap S) = v(T \cup S) +v(Q)$.
	From above we have:
    \begin{equation*}
        \begin{alignedat}{2}
		&v(T \cup Q)-x(Q) \le v(T \cup S)-x(S) &\forall Q \subseteq S, Q \neq \emptyset\\
		\implies &\max_{Q \subseteq S, Q \neq \emptyset} (v(T \cup Q)-x(Q)) = v(T \cup S)-x(S)\\
		\implies &\max_{Q \subseteq S} (v(T \cup Q)-x(Q)) = \max \{v(T \cup S)-x(S),\; v(T)\}.
        \end{alignedat}
        \end{equation*}
\end{proof}

\begin{lemma}[{\cite[Theorem 2.1]{schrijverCombinatorialOptimizationPolyhedra2004}}]
\label{refinement-sum-decrease}
Let $X,Y$ be subsets of a set $N$ with $X\not\subseteq Y$ and $Y\not\subseteq X$. Then
\[
\left|X\right|\left|N\setminus X\right|+\left|Y\right|\left|N\setminus Y\right|>\left|X\cap Y\right|\left|N\setminus\left(X\cap Y\right)\right|+\left|X\cup Y\right|\left|N\setminus\left(X\cup Y\right)\right|.
\]
\end{lemma}

\begin{proof}
Let $\alpha\coloneqq\left|X\cap Y\right|,\beta\coloneqq\left|X\setminus Y\right|,\gamma\coloneqq\left|Y\setminus X\right|$
and $\delta\coloneqq\left|N\setminus\left(X\cup Y\right)\right|$. Then
\begin{equation*}
        \begin{alignedat}{1}
\left|X\right|\left|N\setminus X\right|+\left|Y\right|\left|N\setminus Y\right| & =\left(\alpha+\beta\right)\left(\gamma+\delta\right)+\left(\alpha+\gamma\right)\left(\beta+\delta\right)\\
 & =2\alpha\delta+2\beta\gamma+\alpha\gamma+\beta\delta+\alpha\beta+\gamma\delta,
 \end{alignedat}
        \end{equation*}
and
\begin{equation*}
        \begin{alignedat}{1}
\left|X\cap Y\right|\left|N\setminus\left(X\cap Y\right)\right|+\left|X\cup Y\right|\left|N\setminus\left(X\cup Y\right)\right| & =\alpha\left(\beta+\gamma+\delta\right)+\left(\alpha+\beta+\gamma\right)\delta\\
 & =2\alpha\delta+\alpha\gamma+\beta\delta+\alpha\beta+\gamma\delta.
  \end{alignedat}
        \end{equation*}
The thesis follows from the fact that $\beta\gamma>0$.
\end{proof}

\begin{proposition}
\label{prop:convexity-superadditive-cover-near-convex-games}Let
$v\colon\powerset V\rightarrow\mathbb{R}$ be an intersecting supermodular
set function, then its superadditive cover $\supaddcover v$ is supermodular on $V$.  Moreover, $\supaddcover v(S)$ and a proper partition $\mathcal T$ of $S$ such that $\supaddcover v\left(S\right)=\sum\nolimits_{T\in\mathcal{T}}v\left(T\right)$ can be computed in (oracle) strongly polynomial time for all $S\subseteq V$.
\end{proposition}

\begin{proof}
Let $S,T\subseteq V$. 
If $S\cap T=\emptyset$, then $\supaddcover v(S)+\supaddcover v(T) \le \supaddcover v(S\cup T) =\supaddcover v(S \cup T) +\supaddcover v(S \cap T)$,
since $\supaddcover v\left(S\cap T\right)=0$ and $\supaddcover v$ is superadditive. 
If $S\cap T\neq\emptyset$,
then let $\mathcal{P},\mathcal{Q}$ be the two proper partitions of $S,T$
respectively for which 
\[
\supaddcover v\left(S\right) =\sum\limits_{P\in\mathcal{P}}v\left(P\right),\qquad \supaddcover v\left(T\right) =\sum\limits_{Q\in\mathcal{Q}}v\left(Q\right),
\]
and let $\mathcal{F}$ be the family formed by sets in both $\mathcal{P}$
and $\mathcal{Q}$ with repetition. 
The family $\mathcal{F}$ can be successively refined to guarantee that, for all $X,Y\in\mathcal{F}$, we have $X\subseteq Y$, $Y\subseteq X$ or $X\cap Y=\emptyset$, by replacing iteratively 
any $X,Y\in\mathcal{F}$ such that $X\cap Y\neq\emptyset$ and $X\nsubseteq Y\nsubseteq X$
with $X\cup Y,X\cap Y$. At each step, the sum $\sum_{U\in\mathcal{F}}v\left(U\right)$
does not decrease (due to the intersecting supermodularity of $v$),
while Lemma~\ref{refinement-sum-decrease} grants that the sum $\sum_{U\in\mathcal{F}}\left|U\right|\left|V\setminus U\right|$ always
decreases, proving that the refinement procedure eventually terminates.

Note that we also preserve that each element of $S\cap T$ is contained in exactly two sets of $\mathcal{F}$, while the remaining elements of $(S\cup T)\setminus(S\cap T)$ are each contained exactly once in a set of $\mathcal{F}$.
The family $\mathcal{M}$ of the inclusion-wise maximal sets of $\mathcal{F}$ (at the end of the refinement procedure) is therefore a partition of $S\cup T$, as the elements of $S \cup T$ are covered by at least an element of $\mathcal{M}$, and the intersection of the inclusion-wise maximal elements is empty.
The family $\mathcal{N}\coloneqq\mathcal{F}\setminus\mathcal{M}$ partitions the set $S\cap T$, as the elements of $S \cap T$ are covered exactly once, while the elements of $(S \cup T) \setminus (S\cap T)$ are not covered.
The supermodularity of $\supaddcover v$ follows from
\[
\supaddcover v\left(S\right)+\supaddcover v\left(T\right)=\sum\limits_{P\in\mathcal{P}}v\left(P\right)+\sum\limits_{Q\in\mathcal{Q}}v\left(Q\right)\leq\sum\limits_{M\in\mathcal{M}}v\left(M\right)+\sum\limits_{N\in\mathcal{N}}v\left(N\right)\leq\supaddcover v\left(S\cup T\right)+\supaddcover v\left(S\cap T\right).
\]


Algorithm~\ref{alg:superadditive-cover} allows us to compute in strongly polynomial time
the value of the superadditive cover $\supaddcover v$ for a given
$S\subseteq V$, $S \neq \emptyset$ when $v$
is intersecting supermodular on $S$ (while for $S=\emptyset$, $\supaddcover v = 0$ by definition); the correctness of algorithm is proved
below.

\begin{algorithm}[h]
\begin{algorithmic}[1]
\begin{inputblock}
A set $S\subseteq V$, $S \neq \emptyset$ and a set function $v\colon\powerset V\rightarrow\mathbb{R}$ intersecting
supermodular on $S$

\end{inputblock}

\begin{outputblock}
The value of $\supaddcover v\left(S\right)$ and a  proper partition $\mathcal T$ of $S$ such that $\supaddcover v\left(S\right)=\sum\nolimits_{T\in\mathcal{T}}v\left(T\right)$

\end{outputblock}

\State{Assume \textsc{w.l.o.g.}~that $S=\{1,...,k\}$} 

\State{$x_i\algset0$\;$\forall i \in S$}

\For{$i= 1,\ldots,k $}{}

\State{\noindent$T_{{i}}\algset\argmax\left\{ v\left(U\right)-x\left(U\setminus\{i\}\right)\setdef U\subseteq\left\{ {1},\ldots,{i}\right\}, {i}\in U \right\} $\label{line:superadditive-cover-construction-1}}

\State{\noindent$x_{{i}}\algset v\left(T_{{i}}\right)-x\left(T_{{i}}\setminus\{i\}\right)$\label{line:superadditive-cover-construction-2}}

\EndFor{}

\State{$\mathcal{T}\algset\left\{ T_{i}\setdef i= 1,\ldots,k \right\} $\label{line:superadditive-cover-family-def}}

\For{$T',T''\in\mathcal{T}$ with $T'\neq T''$, $T'\cap T''\neq\emptyset$}{}

\State{$\mathcal{T}\algset\left(\mathcal{T}\setminus\left\{ T',T''\right\} \right)\cup\left\{ T'\cup T''\right\} $\label{line:superadditive-cover-refinement}}

\EndFor{}

\returnstmt{$\sum\nolimits_{T\in\mathcal{T}}v\left(T\right)$, $\mathcal T$}

\end{algorithmic}

\caption{\label{alg:superadditive-cover}Superadditive cover}
\end{algorithm}

In the notation of Algorithm~\ref{alg:superadditive-cover}, we now prove that we have $\supaddcover v\left(S\right)=\sum\nolimits_{T\in\mathcal{T}}x\left(T\right)$ at the end of the algorithm.
Note that $i \in T_i$ $\forall i = 1, \ldots, k$ by line~\ref{line:superadditive-cover-construction-1}.
From the definition of $\mathcal{T}$ of line~\ref{line:superadditive-cover-family-def} and from line~\ref{line:superadditive-cover-construction-2}, it holds that $\bigcup_{T\in\mathcal{T}}T=S$
and $x\left(T\right)=v\left(T\right)\;\forall T\in\mathcal{T}$. 
Now, let $R\subseteq S$, and let $h=\max \left\{i \setdef i\in R\right\}$.
From lines~\ref{line:superadditive-cover-construction-1}-\ref{line:superadditive-cover-construction-2}
it follows that $x_{{h}}\geq v\left(R\right)-x\left(R\setminus\left\{ {h}\right\} \right)$
and therefore $x\left(R\right)\geq v\left(R\right)$.

We prove by induction that $x\left(T\right)= v\left(T\right)$ remains true over all elements of $\mathcal{T}$ after the iterative refinement of line~\ref{line:superadditive-cover-refinement}. Assume inductively that $x\left(T\right)=v\left(T\right)$
$ \forall T\in \mathcal{T}$, and consider $T', T'' \in \mathcal{T}$ with $T'\neq T''$ and $T'\cap T''\neq\emptyset$.
Using $x\left(T'\cap T''\right)\geq v\left(T'\cap T''\right)$, $x\left(T'\right)=v\left(T'\right)$, $x\left(T''\right)=v\left(T''\right)$ and $T', T'' \in \mathcal{T}$,
and thanks to the intersecting supermodularity of $v$ on $S$, we have
\[
v\left(T'\cup T''\right)\geq v\left(T'\right)+v\left(T''\right)-v\left(T'\cap T''\right)\geq x\left(T'\right)+x\left(T''\right)-x\left(T'\cap T''\right)=x\left(T'\cup T''\right),
\]
but since also $x\left(T'\cup T''\right) \geq v\left(T'\cup T''\right)$,
it follows that $v\left(T'\cup T''\right)=x\left(T'\cup T''\right)$. Therefore
at each refinement step the equality $x\left(T\right)=v\left(T\right)$
remains true over all elements $T\in\mathcal{T}$ , and by construction
the relation $\bigcup_{T\in\mathcal{T}}T=S$ is similarly preserved.
The final refinement of $\mathcal{T}$ is composed of non-empty disjoint sets
and is therefore a proper partition of $S$, for which $\supaddcover v\left(S\right)=\sum\nolimits_{T\in\mathcal{T}}x\left(T\right)$
since for any $\mathcal{P}$ proper partition of $S$
\[
\supaddcover v\left(S\right)\geq\sum\nolimits_{T\in\mathcal{T}}v\left(T\right)=\sum\nolimits_{T\in\mathcal{T}}x\left(T\right)=x\left(S\right)=\sum\nolimits_{P\in\mathcal{P}}x\left(P\right)\geq\sum\nolimits_{P\in\mathcal{P}}v\left(P\right),
\]
and thus $\supaddcover v\left(S\right)\geq\sum\nolimits_{T\in\mathcal{T}}v\left(T\right)\geq\max\limits_{\text{\ensuremath{\mathcal{P}} prop.\ part.\ of \ensuremath{S}}}\sum\nolimits_{P\in\mathcal{P}}v\left(P\right)=\supaddcover v\left(S\right)$ and equality holds all along.

Observe that $f(T)\coloneqq v(T \cup \{i\})-x(T)$ is supermodular for $T \subseteq \{1, \ldots, i-1\}$, as $v$ is intersecting supermodular on $S$.
Thus, $\max\left\{ v(U)-x(U\setminus\{i\})\setdef U\subseteq\left\{ {1},\ldots,{i}\right\}, {i}\in U \right\} = \max_{T \subseteq \{1, \ldots, i-1\}} f(T) $ can be computed in (oracle) strongly polynomial time through submodular function minimization. 
We note that we can take care of the refinement steps directly in line~\ref{line:superadditive-cover-construction-1} using a similar trick as in~(\ref{eq:usualtrick}).
\end{proof}



\global\long\def\powerset#1{2^{#1}}%

\global\long\def\setfamily{\mathcal{C}}%

\newcommandx\extendedpolymatroid[1][usedefault, addprefix=\global, 1=]{EP_{#1}}%

\newcommandx\basepolytope[1][usedefault, addprefix=\global, 1=]{B_{#1}}%

\global\long\def\setdef{\;\middle|\;}%

\global\long\def\partition{\mathcal{P}}%

\global\long\def\complementaryy#1{\lnot#1}%

\global\long\def\complementary#1{#1^{*}}%

\global\long\def\checkop{\mathop{\text{chk}}}%

\global\long\def\hatop{\mathop{\text{hat}}}%

\global\long\def\setdef{\;\middle|\;}%

\global\long\def\argmax{\mathop{\arg\max}}%
\global\long\def\argmin{\mathop{\arg\min}}%

\global\long\def\algset{\leftarrow}%

\global\long\def\Ns{N_{s}}%

\newcommandx\restrictedfamily[1][usedefault, addprefix=\global, 1=S]{\mathcal{C}_{#1}}%

\newcommandx\restrictedfun[2][usedefault, addprefix=\global, 1=S, 2=f]{#2_{#1}}%

\newcommandx\restrict[2][usedefault, addprefix=\global, 1=f, 2=\restrictedfamily]{\left.#1\right|_{#2}}%

\newcommandx\checktime[2][usedefault, addprefix=\global, 1=, 2=n]{\mathord{\mathrm{CHK}}_{#1}}%

\global\long\def\bigo{\mathop{O}}%
\global\long\def\bigomega{\mathop{\Omega}}%
\newcommandx\complvarp[1][usedefault, addprefix=\global, 1=n]{p_{#1}}%
\newcommandx\complvarq[1][usedefault, addprefix=\global, 1=n]{q_{#1}}%

\newpage
\section{Testing non-emptiness  of base polytopes associated with crossing submodular functions over families closed under complement\newline}\label{app:C}

In this section, we show that Theorem~\ref{thm:empty} extends to testing the non-emptiness of base polytopes associated with crossing submodular functions on families that are closed under complement. We believe this result may be of independent interest to the submodular function optimization community. For completeness, we provide a self-contained proof and introduce first all necessary definitions and useful results in this broader setting.

\subsection{Preliminaries}

Let $N\subseteq \mathbb{N}$. A family $\setfamily\subseteq\powerset N$ is a \emph{lattice family}
if for any $S,T\in\setfamily$ then $S\cap T,S\cup T\in\setfamily$. A set function $f\colon\setfamily\rightarrow\mathbb{R}$ is called
\emph{lattice submodular} if $f\left(S\right)+f\left(T\right)\geq f\left(S\cap T\right)+f\left(S\cup T\right)$ for all $S,T\in\setfamily$. A general lattice submodular function $f\colon\setfamily\rightarrow\mathbb{R}$
on the lattice family $\setfamily$ can be minimized in polytime by a reduction to
submodular function minimization \cite[§49.3]{schrijverCombinatorialOptimizationPolyhedra2004} (note that $\setfamily$ must be given by the smallest set $\underline{L}$ and the largest set $\bar{L}$ in $\setfamily$, together with the pre-order $\preccurlyeq$ defined by: $u\preccurlyeq v \iff$ each $U\in \setfamily$ containing $v$ also contains $u$).

A family $\setfamily\subseteq\powerset N$ is an \emph{intersecting
family} if for any $S,T\in\setfamily$ with $S\cap T\neq\emptyset$,
then $S\cap T,S\cup T\in\setfamily$. A set function $f\colon\setfamily\rightarrow\mathbb{R}$
is called \emph{intersecting submodular} if $f\left(S\right)+f\left(T\right)\geq f\left(S\cap T\right)+f\left(S\cup T\right)$ for all  $S,T\in\setfamily\enspace\text{with }S\cap T\neq\emptyset$.  A family $\setfamily\subseteq\powerset N$ is a \emph{co-intersecting
family} if for any $S,T\in\setfamily$ with $S\cup T\neq N$, then
$S\cap T,S\cup T\in\setfamily$. A set function $f\colon\setfamily\rightarrow\mathbb{R}$
is called \emph{co-intersecting submodular} if
$f\left(S\right)+f\left(T\right)\geq f\left(S\cap T\right)+f\left(S\cup T\right)
$ for all $S,T\in\setfamily\enspace\text{with }S\cup T\neq N$. A family $\setfamily\subseteq\powerset N$ is a \emph{crossing family}
if for any $S,T\in\setfamily$ with $S\cap T\neq\emptyset$ and $S\cup T\neq N$,
then $S\cap T,S\cup T\in\setfamily$. A set function $f\colon\setfamily\rightarrow\mathbb{R}$
is called \emph{crossing submodular} if 
$
 f\left(S\right)+f\left(T\right)\geq f\left(S\cap T\right)+f\left(S\cup T\right)
$ for all $S,T\in\setfamily\enspace\text{with }S\cap T\neq\emptyset,S\cup T\neq N$. 
For convenience, we additionally introduce the concept of complement
of a set function. For $f\colon\setfamily\rightarrow\mathbb{R}$ set function with $N\in\setfamily$, we call
$\complementary f\colon\complementary{\setfamily}\rightarrow\mathbb{R}$
its \emph{complement} function, defined as 
$\complementary f  \colon S\longmapsto f\left(N\setminus S\right)-f\left(N\right)$ where $ \complementary{\setfamily}  \coloneqq\left\{ N\setminus S\setdef S\in\setfamily\right\}$.

From the definition, it follows that
$f=\complementary{\left(\complementary f\right)}$ under the assumption
that $\emptyset\in\setfamily$ and $f\left(\emptyset\right)=0$, and simple set-theoretical considerations
imply that the submodularity classes of $f$ and $\complementary f$
are related in the following manner:

\begin{center}
\begin{tabular}{c|cccc}
$f$ & lattice & intersecting & co-intersecting & crossing\\
\hline 
$\complementary f$ & lattice & co-intersecting & intersecting & crossing\\
\end{tabular}
\par\end{center}

For any submodular function (and in general for any set function)
two polytopes may be defined as follows.
Given a set function $f\colon\setfamily\rightarrow\mathbb{R}$ on
$\setfamily\subseteq\powerset N$ we define the \emph{extended polymatroid}
associated with $f$ as
$\extendedpolymatroid[f]\coloneqq\left\{ x\in\mathbb{R}^{|N|}\setdef x\left(S\right)\leq f\left(S\right)\;\forall S\in\setfamily\right\} ,
$
and when $N\in\setfamily$, the \emph{base polytope} of $f$ as
$\basepolytope[f]\coloneqq  \;\extendedpolymatroid[f]\cap\left\{ x\in\mathbb{R}^{|N|}\setdef x\left(N\right)=f\left(N\right)\right\} 
= \left\{ x\in\mathbb{R}^{|N|}\setdef x\left(S\right)\leq f\left(S\right)\;\forall S\in\setfamily\text{ and }x\left(N\right)=f\left(N\right)\right\}.$

Notice how in order
to have $\extendedpolymatroid[f]\neq\emptyset$ it must be that $\emptyset\notin\setfamily$
or $f\left(\emptyset\right)\geq0$. The converse is also true: if
either $\emptyset\notin\setfamily$ or $f\left(\emptyset\right)\geq0$
then $\extendedpolymatroid[f]\neq\emptyset$, since $x\in\mathbb{R}^{|N|}$
defined as $x_{s}\coloneqq-M$ for all $s\in N$ with $M>\max_{S\in\setfamily}\left|f\left(S\right)\right|$
is such that $x\in\extendedpolymatroid[f]$. Additionally, notice
how if $x'\in\extendedpolymatroid[f]$, then any $x\in\mathbb{R}^{|N|}$
with $x\leq x'$ (where the inequality is meant element-wise) is such
that $x\in\extendedpolymatroid[f]$. 

For a lattice submodular function with
either $\emptyset\notin\setfamily$ or $f\left(\emptyset\right)\geq0$
the base polytope $\basepolytope[f]$ is also always non-empty. Indeed,
a point $x\in\basepolytope[f]$ can be found by means of an adaptation of the \emph{greedy
algorithm} \cite[§49.3]{schrijverCombinatorialOptimizationPolyhedra2004}. The non-emptiness of $\basepolytope[f]$ is instead not granted in
the case of a function $f$ intersecting or crossing submodular on
$\setfamily$, as differently from the lattice submodular case it
may happen that $\sum_{P\in\mathcal{P}}f\left(P\right)<f\left(N\right)$
for some $\mathcal{P}\subseteq\setfamily$ proper partition of $N$,
and in this case
\[
x\in\extendedpolymatroid[f]\quad\implies\quad x\left(N\right)=\sum_{P\in\mathcal{P}}x\left(P\right)\leq\sum_{P\in\mathcal{P}}f\left(P\right)<f\left(N\right)\quad\implies\quad x\left(N\right)<f\left(N\right),
\]
implying $\basepolytope[f]=\emptyset$. This motivates the introduction of two new functions $\check{f},\hat{f}$, defined as follows. \\
Let $f\colon\setfamily\rightarrow\mathbb{R}$ be an intersecting submodular function on $\setfamily\subseteq\powerset N$. We define the function $\check{f}\colon\check{\setfamily}\rightarrow\mathbb{R}$
with $\check{\setfamily}\coloneqq\left\{ \bigcup\nolimits_{T\in\mathcal{T}}T\setdef\mathcal{T}\subseteq\setfamily\right\} $
as
\begin{align*}
\check{f}\left(S\right)\coloneqq\min\Big\{\sum\nolimits_{P\in\mathcal{P}}f\left(P\right)\;\Big|\;\mathcal{P}\subseteq\setfamily\text{ proper partition of \ensuremath{S}}\Big\}\tag{\textsc{check}}
\end{align*}
$\forall S\in\check{\setfamily}\setminus\left\{ \emptyset\right\} $
and $\check{f}\left(\emptyset\right)=0$ (we allow the union of zero sets $\mathcal{T}=\emptyset$, and correspondingly $\emptyset\in\check{\setfamily}$). 
Note how $\check{f}$ is well-defined, as $\setfamily$ is an intersecting family and thus closed under unions of intersecting sets, making $\check{\setfamily}$ equivalent to the family of disjoint unions of sets in $\setfamily$. \\
Let $f\colon\setfamily\rightarrow\mathbb{R}$ be a co-intersecting submodular function on $\setfamily\subseteq\powerset N$. In case $f\left(N\right)=0$, we define $\hat{f}\colon\hat{\setfamily}\rightarrow\mathbb{R}$
with $\hat{\setfamily}\coloneqq\left\{ \bigcap_{T\in\mathcal{T}}T\setdef\mathcal{T}\subseteq\setfamily\right\}$
as
\begin{align*}
\hat{f}\left(S\right)\coloneqq & \min\Big\{\sum\nolimits_{Q\in\mathcal{Q}}f\left(Q\right)\;\Big|\;\mathcal{Q}\subseteq\setfamily\text{ proper copartition of \ensuremath{N\setminus S}}\Big\}\tag{\textsc{hat}}
\end{align*}
$\forall S\in\hat{\setfamily}\setminus\left\{ N\right\} $ and $\hat{f}\left(N\right)=0$
(we allow the intersection of zero sets $\mathcal{T}=\emptyset$, and correspondingly $N\in\hat{\setfamily}$),
where $\mathcal{Q}$ is a \emph{(proper) copartition} of a set if
and only if $\left\{ N\setminus Q\setdef Q\in\mathcal{Q}\right\} $
is a (proper) partition of the same set. Similarly to before, $\hat{f}$ is well-defined as a consequence of $\setfamily$ being a co-intersecting family. \\
If $f\colon\setfamily\rightarrow\mathbb{R}$ is instead taken to be a crossing submodular function on $\setfamily\subseteq\powerset N$, the functions $\check{f}$ and $\hat{f}$ can still be defined by taking the domain families to be $\check{\setfamily}\coloneqq\left\{ \bigcup\nolimits_{T\in\mathcal{T}}T\setdef\mathcal{T}\subseteq\setfamily\right\} \setminus\left\{ N\right\} $ and $\hat{\setfamily}\coloneqq\left\{ \bigcap_{T\in\mathcal{T}}T\setdef\mathcal{T}\subseteq\setfamily\right\} \setminus\left\{ \emptyset\right\} $ respectively.

Dunstan proved in \cite{dunstanMatroidsSubmodularFunctions1976}
the following result on intersecting submodular functions, provided
also in \cite[Theorem 49.4]{schrijverCombinatorialOptimizationPolyhedra2004}.
\begin{proposition}
\label{prop:lattice-submodularity-of-f-check}If $f\colon\setfamily\rightarrow\mathbb{R}$
is a intersecting submodular function on $\setfamily\subseteq\powerset N$,
then $\check{f}\colon\check{\setfamily}\rightarrow\mathbb{R}$ is
a lattice submodular function on \textup{$\check{\setfamily}$.}
\end{proposition}

A similar result holds for crossing submodular functions with respect
to the hat operator (see \cite[Theorem 49.6]{schrijverCombinatorialOptimizationPolyhedra2004}).
\begin{proposition}
\label{prop:intersecting-submodularity-of-f-hat}If $f\colon\setfamily\rightarrow\mathbb{R}$
is a crossing submodular function on $\setfamily\subseteq\powerset N$,
$f\left(N\right)=0$, then $\hat{f}\colon\hat{\setfamily}\rightarrow\mathbb{R}$
is an intersecting submodular function on \textup{$\hat{\setfamily}$.}
\end{proposition}

We now show a polynomial-time algorithmic procedure, provided by Schrijver
in \cite[§49.7]{schrijverCombinatorialOptimizationPolyhedra2004}
and displayed in Algorithm~\ref{alg:f-check-computation}, for computing the
value of $\check{f}\left(S\right)$ for $S\in\check{\setfamily}$
given an intersecting submodular function $f\colon\setfamily\rightarrow\mathbb{R}$.
To implement it, let us introduce first the set $L_{s}\coloneqq\bigcup\left\{ T\in\setfamily\setdef s\in T\right\} $
for each $s\in N$. We will assume to have a membership oracle for
$\setfamily$, returning whether $S\in\setfamily$ or not for a given
set $S\subseteq N$.
\begin{algorithm}[H]
\begin{algorithmic}[1]
\begin{inputblock}
A function $f\colon\setfamily\rightarrow\mathbb{R}$ intersecting
submodular on $\setfamily\subseteq\powerset N$, a set $S\in\check{\setfamily}$

\end{inputblock}

\begin{outputblock}
The value of $\check{f}\left(S\right)$

\end{outputblock}

\State{Order the elements of $S$ as $s_{1},...,s_{k}$ such that if $L_{s_{i}}\subsetneq L_{s_{j}}$
then $i<j$\label{line:f-check-init-ordering}}

\State{Define $S_{i}\coloneqq L_{s_{1}}\cup...\cup L_{s_{i}}$ for $i=1,...,k$}

\State{$x_{i}\algset0\quad\forall i\in S$}

\For{$i\in\left\{ 1,...,k\right\} $}{}

\State{\noindent$T_{i}\algset\argmin\left\{ f\left(U\right)-x\left(U\setminus\left\{ s_{i}\right\} \right)\setdef U\in\setfamily,U\subseteq S_{i},s_{i}\in U\right\} $\label{line:f-check-construction-1}}

\State{\noindent$x_{s_{i}}\algset f\left(T_{i}\right)-x_{i}\left(T_{i}\setminus\left\{ s_{i}\right\} \right)$\label{line:f-check-construction-2}}

\EndFor{}

\State{$\mathcal{T}\algset\left\{ T_{i}\setdef i=1,...,k\right\} $\label{line:f-check-family-def}}

\For{$T',T''\in\mathcal{T}$ with $T'\neq T''$, $T'\cap T''\neq\emptyset$}{\label{line:f-check-refinement-conditions}}

\State{$\mathcal{T}\algset\left(\mathcal{T}\setminus\left\{ T',T''\right\} \right)\cup\left\{ T'\cup T''\right\} $\label{line:f-check-refinement}}

\EndFor{}

\returnstmt{$\sum\nolimits_{T\in\mathcal{T}}f\left(T\right)$}

\end{algorithmic}

\caption{\label{alg:f-check-computation}Computation of $\check{f}$}
\end{algorithm}
The following Proposition~\ref{prop:f-check-computation} proves the correctness
of Algorithm~\ref{alg:f-check-computation}.
\begin{proposition}
\label{prop:f-check-computation}In the notation of Algorithm~\ref{alg:f-check-computation},
 $\check{f}\left(S\right)=\sum\nolimits_{T\in\mathcal{T}}x\left(T\right)$.
\end{proposition}

\begin{proof}
Note first that $s_{i}\in T_{i}$ $\forall i=1,...,k$ by line~\ref{line:f-check-construction-1}.
From the definition of $\mathcal{T}$ of line~\ref{line:f-check-family-def}
and from line~\ref{line:f-check-construction-2} it holds that $\bigcup_{T\in\mathcal{T}}T=S$
and $x\left(T\right)=f\left(T\right)$ $\forall T\in\mathcal{T}$.
Let now $U\in\setfamily$, and let $h\in\left\{ 1,...,k\right\}$  be the maximal
$i$ for which $s_{i}\in U$. From lines~\ref{line:f-check-construction-1}-\ref{line:f-check-construction-2}
it follows that $x_{s_{h}}\leq f\left(U\right)-x\left(U\setminus\left\{ s_{h}\right\} \right)$
and therefore $x\left(U\right)\leq f\left(U\right)$ for all $U\in\setfamily$.

Notice that the conditions of line~\ref{line:f-check-refinement-conditions}
guarantee that at each step of the iterative refinement of line~\ref{line:f-check-refinement}
the set $T'\cup T''$ (as well as the set $T'\cap T''$) is still
in $\setfamily$. We prove by induction that the inequality $x\left(T\right)\leq f\left(T\right)$
remains tight over all elements of $\mathcal{T}$ after each of these
steps. The induction hypothesis is $x\left(T'\right)=f\left(T'\right),x\left(T''\right)=f\left(T''\right)$
for $T',T''\in\mathcal{T}$ with $T'\neq T''$ and $T'\cap T''\neq\emptyset$.
Recalling that $x\left(T'\cap T''\right)\leq f\left(T'\cap T''\right)$,
and thanks to the intersecting submodularity of $f$, we have
\[
f\left(T'\cup T''\right)  \leq f\left(T'\right)+f\left(T''\right)-f\left(T'\cap T''\right)
  \leq x\left(T'\right)+x\left(T''\right)-x\left(T'\cap T''\right)
  =x\left(T'\cup T''\right),        
    \]
but since also $x\left(T'\cup T''\right)\leq f\left(T'\cup T''\right)$,
it follows $x\left(T'\cup T''\right)=f\left(T'\cup T''\right)$. Therefore
\mbox{$x\left(T\right)=f\left(T\right)$} remains true over all elements
$T\in\mathcal{T}$. By construction the relation $\bigcup_{T\in\mathcal{T}}T=S$
is also similarly preserved after each step. The final refinement
of $\mathcal{T}$ is composed of disjoint sets and is therefore a
partition of $S$.

To conclude, we have that $\check{f}\left(S\right)=\sum\nolimits_{T\in\mathcal{T}}x\left(T\right)$
since for any $\mathcal{P}$ proper partition of $S$
\[
\check{f}\left(S\right)\leq\sum\nolimits_{T\in\mathcal{T}}f\left(T\right)=\sum\nolimits_{T\in\mathcal{T}}x\left(T\right)=x\left(S\right)=\sum\nolimits_{P\in\mathcal{P}}x\left(P\right)\leq\sum\nolimits_{P\in\mathcal{P}}f\left(P\right),
\]
thus 
\[
\check{f}\left(S\right)\leq\sum\limits_{T\in\mathcal{T}}f\left(T\right)\leq\min_{\mathcal{P}\text{ prop.\ part.\ of }S}\sum\limits_{P\in\mathcal{P}}f\left(P\right)=\check{f}\left(S\right),
\]
 equality holding throughout.
\end{proof}

Algorithm~\ref{alg:f-check-computation}
may also be used to compute the value of $\check{f}\left(S\right)$
when $f$ is crossing (or co-intersecting) submodular on $\setfamily$,
if $S\neq N$. Indeed whenever $S\subsetneq N$, the restriction $\restrictedfun\coloneqq\restrict$
to the set $\restrictedfamily\coloneqq\setfamily\cap\powerset S=\left\{ T\in\setfamily\setdef T\subseteq S\right\} $
becomes intersecting submodular on its domain $\setfamily_{S}$, while
from the definition of the check operator replacing $f$ with $\restrictedfun$
does not change the value of $\check{f}\left(S\right)$, that is $\check{f}\left(S\right)=\check{\restrictedfun}\left(S\right)$.

Besides, assuming again \textsc{w.l.o.g.}~that $f\left(N\right)=0$,
Algorithm~\ref{alg:f-check-computation} also allows us to compute $\hat{f}$
thanks to the relation
\begin{equation*}
        \begin{alignedat}{1}
\hat{f}\left(S\right)= & \min\Big\{\sum\nolimits_{Q\in\mathcal{Q}}f\left(Q\right)\;\Big|\;\mathcal{Q}\subseteq\setfamily\text{ proper copartition of \ensuremath{N\setminus S}}\Big\}\\
= & \min\Big\{\sum\nolimits_{P\in\mathcal{P}}f\left(N\setminus P\right)\;\Big|\;\mathcal{P}\subseteq\setfamily\text{ proper partition of \ensuremath{N\setminus S}}\Big\}\\
= & \min\Big\{\sum\nolimits_{P\in\mathcal{P}}\left(\complementary f\left(P\right)+\cancel{f\left(N\right)}\right)\;\Big|\;\mathcal{P}\subseteq\setfamily\text{ proper partition of \ensuremath{N\setminus S}}\Big\}\\
= & \check{\left(\complementary f\right)}\left(N\setminus S\right).
\end{alignedat}
\end{equation*}
The intersecting submodularity of $\complementary f$ on $\complementary{\setfamily}$
required by the algorithm, becomes in this case equivalent to the
co-intersecting submodularity of $f$ on $\setfamily$.
As a consequence,
when $f$ is crossing (or intersecting) submodular on $\setfamily$,
$\hat{f}\left(S\right)$ can still be computed when $N\setminus S\neq N$
(i.e.~$S\neq\emptyset$) by computing $\check{(\complementary{\restrictedfun[N\setminus S]})}\left(N\setminus S\right)$
instead.

The polynomial-time complexity of Algorithm~\ref{alg:f-check-computation} is
discussed in \S\ref{subsec:computation-of-f-check-hat}.

\subsection{Base Polytope Non-Emptiness Algorithm\label{sec:Base-Polytope-Non-Emptiness-Algorithm}}

We now discuss and analyze algorithmic approaches
to the detection of the non-emptiness of the base polytope $\basepolytope[f]$
for a crossing submodular function $f$. First, the original method
introduced by Fujishige~\cite{fujishige1984structures}  is presented. Then an alternative approach for functions on
crossing families closed under complement is presented. The time complexity of the two approaches
is finally discussed.

\subsubsection{Fujishige's Approach}

Fujishige \cite{fujishige1984structures} proved the following theorem.

\begin{theorem}
\label{thm:fujishige-non-emptiness-condition}Let $f\colon\setfamily\rightarrow\mathbb{R}$
be a crossing submodular function on the crossing family $\setfamily\subseteq\powerset N$
for which
$N  \in\setfamily$,  $\emptyset\notin\setfamily$ or $f\left(\emptyset\right)  \geq0$,  $f\left(N\right)  =0$.
Then it holds that $\basepolytope[f]\neq\emptyset$ if and only if
\begin{equation}
\begin{cases}
\sum\nolimits_{T\in\mathcal{P}}f\left(T\right)\geq0 & \text{\ensuremath{\forall\mathcal{P}\subseteq\setfamily} proper partition of \ensuremath{N}},\\
\sum\nolimits_{T\in\mathcal{Q}}f\left(T\right)\geq0 & \text{\ensuremath{\forall\mathcal{Q}\subseteq\setfamily} proper copartition of \ensuremath{N}}.
\end{cases}\label{eq:fujishige-non-emptiness-condition}
\end{equation}
\end{theorem}

\begin{proof}
Here we follow the proof presented by Schrijver in \cite[49.10]{schrijverCombinatorialOptimizationPolyhedra2004}.
If there exists $x\in\basepolytope[f]=\extendedpolymatroid[f]\cap\left\{ x\left(N\right)=0\right\} $,
then $\forall\mathcal{P}\subseteq\setfamily$ proper partition of
$N$
\[
\sum\nolimits_{T\in\mathcal{P}}f\left(T\right)\geq\sum\nolimits_{T\in\mathcal{P}}x\left(T\right)=x\left(N\right)=0,
\]
and $\forall\mathcal{Q}\subseteq\setfamily$ proper copartition of
$N$
\[
\sum\nolimits_{T\in\mathcal{Q}}f\left(T\right)\geq\sum\nolimits_{T\in\mathcal{Q}}x\left(T\right)=\sum\nolimits_{T\in\mathcal{Q}}\left(x\left(N\right)-x\left(N\setminus T\right)\right)=\left|\mathcal{Q}\right|x\left(N\right)-x\left(N\right)=0,
\]
thus the condition of (\ref{eq:fujishige-non-emptiness-condition})
is necessary.

To see its sufficiency, consider $g\coloneqq\hat{f}$ on $\mathcal{D}\coloneqq\hat{\setfamily}=\left\{ \bigcap_{T\in\mathcal{T}}T\setdef\mathcal{T}\subseteq\setfamily\right\} \setminus\left\{ \emptyset\right\} $
and $h\coloneqq\check{g}$ on $\mathcal{E}\coloneqq\check{\mathcal{D}}=\left\{ \bigcup\nolimits_{T\in\mathcal{T}}T\setdef\mathcal{T}\subseteq\mathcal{D}\right\}$.
Notice how $N\in\mathcal{E}$. Proposition~\ref{prop:intersecting-submodularity-of-f-hat}
guarantees that $g$ is intersecting submodular on $\mathcal{D}$,
and by Proposition~\ref{prop:lattice-submodularity-of-f-check} $h$ is thus lattice
submodular on $\mathcal{E}$. 
We have that
\[
\basepolytope[f]=\basepolytope[g]=\extendedpolymatroid[g]\cap\left\{ x\left(N\right)=0\right\} =\extendedpolymatroid[h]\cap\left\{ x\left(N\right)=0\right\} ,
\]
We know that $\extendedpolymatroid[h]\neq\emptyset$,
therefore it remains to prove that an element $x$ of $\extendedpolymatroid[h]$
can be found for which $x\left(N\right)=0$. But this is equivalent
to asking that $h\left(N\right)\geq0$.
The condition ``$\basepolytope[f]\neq\emptyset$ if and only if (\ref{eq:fujishige-non-emptiness-condition})''
can thus be rewritten as
``
$h\left(N\right)\geq0\text{ if and only if }(\ref{eq:fujishige-non-emptiness-condition})$
''.

To prove sufficiency, we thus need to prove that $h\left(N\right)\geq0$
when (\ref{eq:fujishige-non-emptiness-condition}) holds. Assume by
contradiction that $h\left(N\right)<0$, and let $\mathcal{P}\subseteq\mathcal{D}$
be a proper partition of $N$ and $\mathcal{Q}_{S}\subseteq\mathcal{C}$
be proper copartitions of each $S\in\mathcal{P}$ for which
\[h\left(N\right)  =\sum\nolimits_{T\in\mathcal{P}}g\left(T\right), \hspace{100pt} g\left(S\right)  =\sum\nolimits_{T\in\mathcal{Q}_{S}}f\left(T\right)\quad\forall S\in\mathcal{P}.\]
Let $\mathcal{F}$ be the family formed by the union with repetition
of all $\mathcal{Q}_{S}$ $\forall S\in\mathcal{P}$. For $\mathcal{F}$
it holds:
\begin{enumerate}
\item \label{enu:family-F-sum-property}$\sum\nolimits_{T\in\mathcal{F}}f\left(T\right)<0$;
\item \label{enu:family-F-element-property}each element of $N$ is contained
in the same number of sets of $\mathcal{F}$.
\end{enumerate}
The property~\ref{enu:family-F-sum-property}.\ comes from the hypothesis
$h\left(N\right)<0$, while~\ref{enu:family-F-element-property}.\ by
the properties of copartitions (let $N$ be
 a set, and let $\mathcal{Q}$ be a copartition of a subset $S\subseteq N$ ; then, collectively, the sets of $\mathcal{Q}$ contain exactly $\left|\mathcal{Q}\right|-1$ copies of each element of $S$ and $\left|\mathcal{Q}\right|$ copies of each element of $N\setminus S$).
The family $\mathcal{F}$ can be successively refined by replacing
with $T'\cup T'',T'\cap T''$ any $T',T''\in\mathcal{F}$ such that
$T'\cap T''\neq\emptyset,T'\cup T''\neq N$ and $T'\nsubseteq T''\nsubseteq T'$.
By Lemma~\ref{refinement-sum-decrease} the sum $\sum_{T\in\mathcal{F}}\left|T\right|\left|N\setminus T\right|$
decreases at each step, proving that the refinement procedure eventually
terminates. At each step, the crossing submodularity of $f$ grants
that the sum $\sum\nolimits_{T\in\mathcal{F}}f\left(T\right)$ does
not increase, preserving~\ref{enu:family-F-sum-property}., while set-theoretical
considerations show that~\ref{enu:family-F-element-property}.\ is also
preserved. We can therefore assume \textsc{w.l.o.g.}~that the family
$\mathcal{F}$, with the two listed properties, is cross-free, that
is for any $T',T''\in\mathcal{F}$ it holds either $T'\subseteq T''$,
$T'\supseteq T''$, $T'\cap T''=\emptyset$ or $T'\cup T''=N$.

We now show that we can iteratively find and remove a partition or
copartition $\mathcal{R}$ of $N$ from $\mathcal{F}$, preserving
both~\ref{enu:family-F-sum-property}.\ and~\ref{enu:family-F-element-property}.,
until it becomes empty. Select $T\in\mathcal{F}$, and assume $T\neq\emptyset,N$
(otherwise we trivially conclude). Let $\mathcal{X}$ be the family
of inclusion-wise maximal sets $\mathcal{F}$ contained in $N\setminus T$,
and $\mathcal{Y}$ be the family of inclusion-wise minimal sets $\mathcal{F}$
that contain $N\setminus T$. The elements in $\mathcal{X}$ are pairwise disjoint, due to $\mathcal{F}$
being cross-free. Similarly, the complements of sets in $\mathcal{Y}$ are pairwise disjoint.

If either $\bigcup\mathcal{X}=N\setminus T$ or $\bigcap\mathcal{Y}=N\setminus T$
then $\mathcal{X}\cup\left\{ T\right\} $ and $\mathcal{Y}\cup\left\{ T\right\} $
are respectively a partition or a copartition of $T$. If instead
there exist $s\in\left(N\setminus T\right)\setminus\left(\bigcup\mathcal{X}\right)$
and $t\in T\cap\left(\bigcap\mathcal{Y}\right)$, since $s\notin T$
and $t\in T$ from~\ref{enu:family-F-element-property}.\ it follows
that there must exist $T'\in\mathcal{F}$ with $s\in T'$ and $t\notin T'$,
and therefore $T\nsubseteq T'\nsubseteq T$. This leaves either $T\cap T'=\emptyset$
(but then $T'\subseteq X$ for some $X\in\mathcal{X}$, hence $s\in\bigcup\mathcal{X}$)
or $T\cup T'=N$ (but then $T'\supseteq Y$ for some $Y\in\mathcal{Y}$,
hence $t\notin\bigcap\mathcal{Y}$), both of which lead to a contradiction.
A partition or copartition $\mathcal{R}$ of $N$ can therefore always
be found inside $\mathcal{F}$.

By the properties of partition and copartitions it follows that removing
the elements $\mathcal{R}$ from $\mathcal{F}$ preserves the validity
of~\ref{enu:family-F-element-property}., and the procedure can thus
be repeated until $\mathcal{F}=\emptyset$. But by (\ref{eq:fujishige-non-emptiness-condition}),
the removal of $\mathcal{R}$ also preserves the validity of~\ref{enu:family-F-sum-property}..
Since~\ref{enu:family-F-sum-property}.\ cannot be true for $\mathcal{F}=\emptyset$,
the assumption $h\left(N\right)<0$ is contradicted, concluding the
proof.
\end{proof}
The result of Theorem~\ref{thm:fujishige-non-emptiness-condition} can be extended
to a crossing submodular function $f\colon\setfamily\rightarrow\mathbb{R}$
without $f\left(N\right)=0$, by replacing it with the
set function $f'\colon\setfamily\rightarrow\mathbb{R}$ defined by
\[
f'\left(S\right)\coloneqq f\left(S\right)-{\textstyle \frac{\left|S\right|}{\left|N\right|}}f\left(N\right)\qquad\forall S\in\setfamily.
\]
Such function $f'$ is still crossing submodular on $\setfamily$
since, for any $S,T\in\setfamily$ with $S\cap T\neq\emptyset$ and
$S\cup T\neq N$, we have 
\begin{equation*}
        \begin{alignedat}{1}
f'\left(S\right)+f'\left(T\right) & =f\left(S\right)+f\left(T\right)-{\textstyle \frac{\left|S\right|+\left|T\right|}{\left|N\right|}}f\left(N\right)\\
 & \leq f\left(S\cap T\right)+f\left(S\cup T\right)-{\textstyle \frac{\left|S\cap T\right|+\left|S\cup T\right|}{\left|N\right|}}f\left(N\right)\\
 & =f'\left(S\cap T\right)+f'\left(S\cup T\right)
 \end{alignedat}
 \end{equation*}
from the crossing submodularity of $f$ and the fact that $\left|S\right|+\left|T\right|=\left|S\cap T\right|+\left|S\cup T\right|$,
and is such that $\basepolytope[f]\neq\emptyset\iff\basepolytope[f']\neq\emptyset$
since
\[
x\in\basepolytope[f]\quad\iff\quad\overline{x}\in\basepolytope[f']\qquad\text{with}\quad\overline{x}_{s}\coloneqq x_{s}-{\textstyle \frac{1}{\left|N\right|}}x\left(N\right)\quad\forall s\in N.
\]


The proof of Theorem~\ref{thm:fujishige-non-emptiness-condition} provides
a polynomial-time algorithmic procedure to detect the non emptiness
of $\basepolytope[f]$, based on the condition
\begin{equation}
\basepolytope[f]\neq\emptyset\qquad\text{if and only if}\qquad h\left(N\right)\geq0.\label{eq:hat-check-non-emptiness-condition}
\end{equation}
Indeed, it proves that it is enough to compute $h\left(N\right)$
and check for its non-negativity to obtain $\basepolytope[f]\neq\emptyset$.
The computational complexity of this procedure is discussed in \S\ref{subsec:complexity-fujishige}.

\subsubsection{Discrete Sandwich Theorem Approach\label{subsec:Discrete-Sandwich-Theorem-Approach}}

The key idea behind our new approach relies on a result,
known in literature as \emph{discrete sandwich theorem}, whose original
formulation is due to Frank \cite{frankAlgorithmSubmodularFunctions1982a}
(additional references are \cite{fujishigeSystemLinearInequalities1984},
\cite[46.2b]{schrijverCombinatorialOptimizationPolyhedra2004}). Here,
we provide it in an equivalent formulation.
\begin{lemma}[Frank's discrete sandwich theorem]
\label{lem:Franks-discrete-sandwich-theorem}Let $f\colon\setfamily\rightarrow\mathbb{R}$, $g\colon\setfamily\rightarrow\mathbb{R}$
be two lattice submodular functions on $\setfamily$. Then there exists
$x\in\mathbb{R}^{|N|}$ satisfying 
\[
-g\left(S\right)\leq x\left(S\right)\leq f\left(S\right)\qquad\forall S\in\setfamily,
\]
if and only if $-g\left(S\right)\leq f\left(S\right)$ for all $S\in\setfamily$,
or equivalently if and only if 
\[
\min_{S\in\setfamily}\left(f\left(S\right)+g\left(S\right)\right)\geq0.
\]
\end{lemma}

The final rewriting as a minimization problem is most useful, as it
allows for an efficient algorithmic check of the condition, since
the function $f+g$ is submodular, and its minimum can be computed
with efficient submodular function minimization algorithms.

All elements are now in place to state our main result.

\global\long\def\setfamilyaux{\setfamily'}%

\begin{theorem}
\label{thm:sandwich-non-emptiness-condition}Let $f\colon\setfamily\rightarrow\mathbb{R}$
be a crossing submodular function on the crossing family $\setfamily\subseteq\powerset N$
for which
$\complementary{\setfamily}  =\setfamily$; $ \emptyset,N  \in\setfamily$;  $f\left(\emptyset\right)  \geq0$.
Then for any selection of $s\in N$ and for \mbox{$\setfamilyaux\coloneqq\restrictedfamily[N\setminus\left\{ s\right\} ]=\left\{ S\in\setfamily\setdef S\subseteq\left(N\setminus\left\{ s\right\} \right)\right\} $}
it holds that
\begin{equation}
\basepolytope[f]\neq\emptyset\qquad\text{if and only if}\qquad\min_{S\in\check{\setfamilyaux}}\left(\check{f}\left(S\right)+\check{\complementary f}\left(S\right)\right)\geq0.\label{eq:sandwich-non-emptiness-condition}
\end{equation}
\end{theorem}

\begin{proof}
Notate $N\setminus\left\{ s\right\} $ as $\Ns$ for a selected element
$s\in N$. The system defining the base polytope $\basepolytope[f]$
is
\[
\begin{cases}
\phantom{x\left(N\right)}\mathllap{x\left(S\right)}\leq f\left(S\right) & \forall S\in\setfamily,\\
x\left(N\right)=f\left(N\right),
\end{cases}
\]
and can rewritten by explicitly writing the component $x_{s}$ of
the variable $x$ as
\begin{equation}
\begin{cases}
\phantom{x\left(\Ns\right)+x_{s}}\mathllap{x\left(S\right)}\leq f\left(S\right) & \forall S\subseteq\Ns\text{ s.t. \ensuremath{S\in\setfamily}},\\
\phantom{x\left(\Ns\right)+x_{s}}\mathllap{x\left(S\right)+x_{s}}\leq f\left(S\cup\left\{ s\right\} \right) & \forall S\subseteq\Ns\text{ s.t. \ensuremath{\left(S\cup\left\{ s\right\} \right)\in\setfamily}},\\
x\left(\Ns\right)+x_{s}=f\left(N\right).
\end{cases}\label{eq:base-polytope-rewrite-1}
\end{equation}
By subtracting the last two constraints we obtain
\[
\left(x\left(\Ns\right)+\bcancel{x_{s}}\right)-\left(x\left(S\right)+\bcancel{x_{s}}\right)=x\left(\Ns\setminus S\right)\geq f\left(N\right)-f\left(S\cup\left\{ s\right\} \right),
\]
for each $S\subseteq\Ns$ such that $\left(S\cup\left\{ s\right\} \right)\in\setfamily$.
By now taking $S'\coloneqq\Ns\setminus S$, for which consequently
$S\cup\left\{ s\right\} =\left(\Ns\setminus S'\right)\cup\left\{ s\right\} =N\setminus S'$ and thus $S'\in\complementary{\setfamily}$,
the expression above becomes
\begin{equation}
x\left(S'\right)\geq f\left(N\right)-f\left(N\setminus S'\right)=-\complementary f\left(S'\right)\qquad\forall S'\subseteq\Ns\text{ s.t. \ensuremath{S'\in\complementary{\setfamily}}},\label{eq:base-polytope-complementary-expr}
\end{equation}
with $\complementary f$ again crossing submodular. Replacing the
second constraint of (\ref{eq:base-polytope-rewrite-1}) with (\ref{eq:base-polytope-complementary-expr}),
and recalling that $\complementary{\setfamily}=\setfamily$ by hypothesis,
we have that $x\in\basepolytope[f]$ if and only if
\begin{equation}
\begin{cases}
\phantom{x\left(\Ns\right)+x_{s}}\mathllap{x\left(S\right)}\leq f\left(S\right) & \forall S\subseteq\Ns\text{ s.t. \ensuremath{S\in\setfamily}},\\
\phantom{x\left(\Ns\right)+x_{s}}\mathllap{x\left(S\right)}\geq-\complementary f\left(S\right) & \forall S\subseteq\Ns\text{ s.t. \ensuremath{S\in\setfamily}},\\
x\left(\Ns\right)+x_{s}=f\left(N\right).
\end{cases}\label{eq:base-polytope-rewrite-2}
\end{equation}

Since no inequalities of the system (\ref{eq:base-polytope-rewrite-2})
are relative the component $x_{s}$, this is uniquely determined by
$x_{s}=f\left(N\right)-x\left(\Ns\right)$ for any instantiation of
all other components of $x$. The feasibility of the system (\ref{eq:base-polytope-rewrite-2})
(and thus the non-emptiness of $\basepolytope[f]$) is therefore equivalent
to the existence of a vector $x$ in the subspace $\mathbb{R}^{|\Ns|}$
of $\mathbb{R}^{|N|}$ for which
\begin{equation}
-\complementary f\left(S\right)\leq x\left(S\right)\leq f\left(S\right)\qquad\forall S\subseteq\Ns\text{ s.t. \ensuremath{S\in\setfamily}}.\label{eq:non-emptiness-condition-crossing-submodular}
\end{equation}

Replacing the functions $f$ and $\complementary f$ with their restrictions
$\restrictedfun[N_{s}]\coloneqq\restrict[][\setfamilyaux]$ and $\restrictedfun[N_{s}][\complementary f]\coloneqq\restrict[\complementary f][\setfamilyaux]$
to the same domain $\setfamilyaux=\restrictedfamily[\Ns]=\setfamily\cap\powerset{\Ns}=\complementary{\setfamily}\cap\powerset{\Ns}$
trivially does not change equation (\ref{eq:non-emptiness-condition-crossing-submodular}). The expression thus becomes equivalent to asking for 
\begin{equation}
-\check{\restrictedfun[N_{s}][\complementary f]}\left(S\right)\leq x\left(S\right)\leq\check{\restrictedfun[N_{s}]}\left(S\right)\qquad\forall S\in\check{\setfamilyaux}.\label{eq:non-emptiness-condition-f-check}
\end{equation}
Both $\restrictedfun[N_{s}]$ and $\restrictedfun[N_{s}][\complementary f]$
are intersecting submodular,
with the values of $\check{\restrictedfun[N_{s}]}\left(S\right)$
and $\check{\restrictedfun[N_{s}][\complementary f]}\left(S\right)$
computable though Algorithm~\ref{alg:f-check-computation}. By Proposition~\ref{prop:lattice-submodularity-of-f-check}
the functions $\check{\restrictedfun[N_{s}]}$ and $\check{\restrictedfun[N_{s}][\complementary f]}$
are lattice submodular on $\check{\setfamilyaux}$, and Frank's discrete
sandwich theorem (Lemma~\ref{lem:Franks-discrete-sandwich-theorem}) therefore
implies the equivalence between (\ref{eq:non-emptiness-condition-f-check})
and the minimization problem
\[
\min_{S\in\check{\setfamilyaux}}\left(\check{\restrictedfun[N_{s}]}\left(S\right)+\check{\restrictedfun[N_{s}][\complementary f]}\left(S\right)\right)\geq0.
\]
But since from the definition of the check operator it holds $\check{\restrictedfun[N_{s}]}\left(S\right)=\check{f}\left(S\right)$
for all $S\in\setfamilyaux$ (since always $S\neq N$), and similarly for $\complementary f$,
we conclude that
\[
\basepolytope[f]\neq\emptyset\qquad\text{if and only if}\qquad\min_{S\in\check{\setfamilyaux}}\left(\check{f}\left(S\right)+\check{\complementary f}\left(S\right)\right)\geq0.
\]
\end{proof}

\subsubsection{Complexity}

This section contains a discussion on the computational complexity
of the presented algorithms. Such complexities are expressed in terms
of the two subroutines of function evaluation and submodular function
minimization. Let therefore $\evaltime[f]$ be the time-complexity
of evaluating a set function $f\colon\setfamily\rightarrow\mathbb{R}$
over any subset in $\setfamily\subseteq\powerset N$ with $n\coloneqq\left|N\right|$,
and when $f$ is lattice submodular on $\setfamily$ let $\subrettime[f][n]$
be the complexity of an oracle that retrieves both the minimum of
$f$ and a subset $S\in \setfamily$ that minimizes it.
When comparing approaches, we will assume 
\[
\subrettime[f][n]\coloneqq\bigo(\complvarp\cdot\evaltime[f]+\complvarq),
\]
with $\complvarp$ and $\complvarq$ functions of $n$ (asymptotically)
bounded by a polynomial in $n$.

The survey \cite{leeFasterCuttingPlane2015} states that, in the case
where $\setfamily=\powerset N$, the best-known method for minimizing
the submodular function $f$ has a complexity of $\bigo(n^{3}\log^{2}n\cdot\evaltime[f]+n^{4}\log^{\bigo(1)}n)$.
Such method does not directly retrieve a minimizing subset $S\subseteq N$,
which can however be computed with $n$ calls to the minimization
procedure. The values for $\complvarp$ and $\complvarq$ when $\setfamily=\powerset N$
could therefore be chosen as $n^{4}\log^{2}n$ and $n^{5}\log^{\bigo(1)}n$
respectively. Notice in particular how in this case $\subrettime[f][n]$
can be assumed equal to $\bigo(\complvarp\cdot\evaltime[f])$ any
time $\evaltime[f]$ is $\bigomega\left(n\right)$.
In general, it is reasonable to assume that evaluating $f\left(\cdot\right)$
is at least as hard as evaluating $x\left(\cdot\right)$ for any $S\subseteq N$,
which has a complexity of $\bigo(n)$ when the complexity of retrieving
$x_{s}$ for any $s\in N$ is taken to be $\bigo(1)$.

The minimization of a general lattice submodular function $f\colon\setfamily\rightarrow\mathbb{R}$
on the lattice family $\setfamily$ can be achieved through a polynomial-time
reduction to the submodular function minimization procedure, as described in \cite[§49.3]{schrijverCombinatorialOptimizationPolyhedra2004}.

\subsubsection*{Computation of $\check{f}$ and $\hat{f}$\label{subsec:computation-of-f-check-hat}}

To begin, notice how the complexity $\evaltime[\complementary f]$
of evaluating $\complementary f$ is the same as $\evaltime[f]$,
since from its definition computing $\complementary f$ requires two
evaluations of $f$ (of which one always to $f\left(N\right)$), thus
$\evaltime[\complementary f]=\bigo(\evaltime[f])$.

We now follow Algorithm~\ref{alg:f-check-computation}. Starting from line~\ref{line:f-check-init-ordering},
we observe that the sets $L_{s}$ for each $s\in N$ can be retrieved
easily assuming an efficient representation of the family $\setfamily$
(see \cite[§49.7]{schrijverCombinatorialOptimizationPolyhedra2004}),
if properly hashed the inclusion test between sets $L_{s}$ can be
done in $\bigo\left(n\right)$, while using efficient algorithms such
as merge sort brings the total complexity of ordering the elements
$s_{1},...,s_{k}$ to $\bigo\left(n\cdot n\log n\right)$.

Computing the function $\check{f}$ for an admissible input function
$f$ requires first the iterative retrieval for $i=1,...,k$ (lines~\ref{line:f-check-construction-1}-\ref{line:f-check-construction-2})
of the minimizer of the function $\phi$ defined by $\phi\colon U\mapsto f\left(U\cup\left\{ s_{i}\right\} \right)-x\left(U\right)$
for $U\in\powerset{\left\{ s_{1},...,s_{i-1}\right\} }$ with $U\cup\left\{ s_{i}\right\} \in\setfamily$,
which is lattice submodular since $x$ is trivially modular and the
intersecting submodularity of $f$ implies the lattice submodularity
of $U\mapsto f\left(U\cup\left\{ s_{i}\right\} \right)$. By assumption
we can neglect the evaluation of $x\left(\cdot\right)$ compared to
that of $f\left(\cdot\right)$, thus $\evaltime[\phi]=\bigo(\evaltime[f])$
and consequently $\subrettime[\phi][k]=\bigo(\subrettime[f][k])$.

Since $\mathcal{T}$ contains at most $k$ elements, a naïve implementation
of its refinement iteration --- cycling over all pairs of elements
in $\mathcal{T}$ and keeping track of all sets to be merged ---
would terminate in at most $k^{2}$ steps. The final complexity of
the computation is therefore equal to
\[
\checktime[f][n]\coloneqq  \bigo(n^{2}\log n+n\cdot\subrettime[\phi][k]+n^{2})
=  \bigo(n\cdot\subrettime[f][k]).
\]
We already discussed  how to compute $\hat{f}$
from $\check{(\complementary f)}$ for an admissible input $f$ in
the assumption that $f\left(N\right)=0$ with Algorithm~\ref{alg:f-check-computation}. The time-complexity of computing
$\hat{f}\left(S\right)$ for a given $S$ is hence the same as that
of computing $\check{(\complementary f)}\left(N\setminus S\right)$,
that is $\checktime[\complementary f][k]=\checktime[f][k]$. We will
therefore refer to it directly as $\checktime[f][k]$.

\subsubsection*{Fujishige\textquoteright s Approach\label{subsec:complexity-fujishige}}

Notate $g\coloneqq\hat{f}$ and $h\coloneqq\check{g}$ as in the proof
of Theorem~\ref{thm:fujishige-non-emptiness-condition}. Since $f$ is crossing
submodular, the input requirements are not directly satisfied in order
to compute $g$ with Algorithm~\ref{alg:f-check-computation}. However, we notice
that the value of $g$ is accessed only during the computation of
$h$ (with Algorithm~\ref{alg:f-check-computation} without issues this time,
since $g$ is intersecting submodular) which implies that $g\left(S\right)$
is never evaluated for $S=\emptyset$. We can thus again use Algorithm~\ref{alg:f-check-computation} to compute
$g$.

Given and evaluation oracle for $f$ with complexity $\oracletime[f]$, we have that the complexity of checking for the non-emptiness of $\basepolytope[f]$
through the condition (\ref{eq:hat-check-non-emptiness-condition})
is equal to that of computing $h\left(N\right)$, thus
\begin{equation}\label{eq:complexity-fujishige}
    \begin{alignedat}{1}
    \evaltime[h] & =\checktime[g][n]
  =\bigo(n\cdot\subrettime[g][n])
  =\bigo(n\complvarp\cdot\evaltime[g]+n\complvarq)
  =\bigo(n\complvarp\cdot\checktime[f][n]+n\complvarq)\\ 
  &=\bigo(n^{2}\complvarp\cdot\subrettime[f][n]+n\complvarq) 
  =\bigo(n^{2}\complvarp^{2}\cdot\oracletime[f]+n^{2}\complvarp\complvarq+\cancel{n\complvarq}) \\
  &=\bigo(n^{2}\complvarp^{2}\cdot\oracletime[f]+n^{2}\complvarp\complvarq).
    \end{alignedat}
\end{equation}

\subsubsection*{Discrete Sandwich Theorem Approach\label{subsec:complexity-sandwich}}

Consider now the test for the non-emptiness of $\basepolytope[f]$
based on the condition (\ref{eq:sandwich-non-emptiness-condition})
of Theorem~\ref{thm:sandwich-non-emptiness-condition}. This procedure relies
on the evaluation of the sum of the functions of $\check{f}$ and
$\check{\complementary f}$, where $f$ (and thus $\complementary f$)
is crossing submodular. Again, Algorithm~\ref{alg:f-check-computation} can be
used nonetheless to compute the two functions $\check{f},\check{\complementary f}$ as they are evaluated
on sets $S\subseteq N\setminus\left\{ s\right\} $ (or equivalently the intersecting submodular restrictions $\restrictedfun[N_{s}]$
and $\restrictedfun[N_{s}][\complementary f]$ are employed in computing
$\check{\restrictedfun[N_{s}]}$ and $\check{\restrictedfun[N_{s}][\complementary f]}$,
as in the proof of Theorem~\ref{thm:sandwich-non-emptiness-condition}). Hence, the complexity of this evaluation is 
\[
\evaltime[\left(\check{f}+\check{\complementary f}\right)]=\bigo(\checktime[f][n-1]+\checktime[\complementary f][n-1])=\bigo(\checktime[f][n])=\bigo(n\cdot\subrettime[f][n]),
\]
since $\evaltime[\complementary f]=\bigo\left(\oracletime[f]\right)$.
Again as a function of the complexity $\oracletime[f]$ of an evaluation oracle for $f$, the total complexity of the non-emptiness test is that of the minimization
of the submodular function $\check{f}+\check{\complementary f}$,
that is 
\begin{equation*}
    \begin{alignedat}{1}
      \subrettime[\left(\check{f}+\check{\complementary f}\right)][n] & =\bigo(\complvarp\cdot\evaltime[\left(\check{f}+\check{\complementary f}\right)]+\complvarq)\nonumber \\
 & =\bigo(n\complvarp\cdot\subrettime[f][n]+\complvarq)\nonumber \\
 & =\bigo(n\complvarp^{2}\cdot\oracletime[f]+n\complvarp\complvarq+\cancel{\complvarq})\nonumber \\
 & =\bigo(n\complvarp^{2}\cdot\oracletime[f]+n\complvarp\complvarq). 
    \end{alignedat}
\end{equation*}
We therefore observe an improvement by a factor of $n$ from the complexity
of (\ref{eq:complexity-fujishige}).


\newpage{}

\bibliographystyle{amsplain}
\phantomsection\addcontentsline{toc}{section}{\refname}
\bibliography{bib_mathOR}

\end{document}